\newtheorem{lemma}{Lemma}
\newtheorem{definition}{Definition}
\newtheorem{example}{Example}
\title{Wireless Network-Coded Four-Way Relaying Using Latin Hyper-Cubes}
\begin{document}

\author{
\authorblockN{Srishti Shukla and B. Sundar Rajan}\\
\authorblockA{Email: {$\lbrace$srishti, bsrajan$\rbrace$} @ece.iisc.ernet.in\\
IISc Mathematics Initiative (IMI), Dept. of Mathematics and Dept. of Electrical Comm. Engg., IISc, Bangalore\\
}
}

\maketitle
\begin{abstract}
This paper deals with physical layer network-coding for the four-way wireless relaying scenario where four nodes A, B, C and D wish to communicate their messages to all the other nodes with the help of the relay node R. The scheme given in the paper is based on the denoise-and-forward scheme proposed first by Popovski et al. in \cite{PoY1}. Intending to minimize the number of channel uses, the protocol employs two phases: Multiple Access (MA) phase and Broadcast (BC) phase with each phase utilizing one channel use. This paper does the equivalent for the four-way relaying scenario as was done for the two-way relaying scenario by Koike-Akino et al. \cite{KPT}, and for three-way relaying scenario in \cite{SVR}. It is observed that adaptively changing the network coding map used at the relay according to the channel conditions greatly reduces the impact of multiple access interference which occurs at the relay during the MA phase. These network coding maps are so chosen so that they satisfy a requirement called \textit{exclusive law}. We show that when the four users transmit points from the same M-PSK constellation, every such network coding map that satisfies the exclusive law can be represented by a 4-fold Latin Hyper-Cube of side M. The network code map used by the relay for the BC phase is explicitly obtained and is aimed at reducing the effect of interference at the MA stage.
\end{abstract} 

\section{Background And Preliminaries}
Physical layer network coding for the two-way relay channel exploits the multiple access interference occurring at the relay so that the communication between the end nodes can be done using a two stage protocol. This two-stage protocol was first introduced in \cite{ZLL}, and \cite{KMT}, \cite{PoY} deal with the information theoretic studies for the scheme. In \cite{KPT}, modulation schemes to be used at the nodes for uncoded transmission for the two-way relaying were studied.

Work done for the relay channels with three or more user nodes is given in \cite{SVR,LiA,PiR,PaO,JKPL}. In \cite{LiA}, a two stage operation called joint network and superposition coding, which employs four channel uses, three for the MA phase and one for the BC phase, has been proposed. The three users transmit to the relay node one-by-one in the first phase, and the relay node makes two superimposed XOR-ed packets and transmits back to the users in the BC phase. The packet from the node with the worst channel gain is XOR-ed with the other two packets. It is claimed by the authors that this scheme can be extended to more than three users as well. Work by Pischella and Ruyet in \cite{PiR} proposes a lattice-based coding scheme combined with power control, composed of alternate MA and BC phases, consisting of four channel uses for three-way relaying. The relay receives an integer linear combination of the symbols transmitted by the user nodes. It is stated that the scheme can be extended to more number of users. These two works however deal with the information theoretic aspects of multi-way relaying.

In \cite{JKPL}, the authors Jeon et al. adopt an `opportunistic scheduling technique' for physical network coding where using a channel norm criterion and a minimum distance criterion, users in the MA as well as the BC phase are selected on the basis of instantaneous SNR. Their approach utilizes six channel uses in case of three-way relaying and it is mentioned that the approach can be extended to more number of users. A `Latin square-like condition' for the three-way relay channel network code is proposed in \cite{PaO}, and it is suggested that cell swapping techniques on these Latin Cubes can be employed to improve upon these network codes. The protocol employs five channel uses, and the network coding map discussed doesn't deal with the channel gains associated with the channels explicitly. Latin Cubes have been further explored as a tool to find the network coding map used by the relay, depending on the channel gain in \cite{SVR}. The throughput performance of the two stage protocol for three-way relaying given in \cite{SVR} is better than the throughput performance of the `opportunistic scheduling technique' given in \cite{JKPL} at high SNR, as can be observed from the plots given in \cite{SVR}. Our work in this paper further extends the approach used in \cite{SVR} to four-way relaying and employs two channel uses for the entire information exchange amongst the four users, which makes the throughput performance of the scheme better than the other existing schemes

\begin{figure}[tp]
\center
\includegraphics[height=35mm]{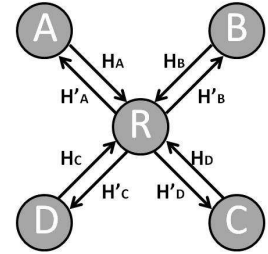}
\vspace{-.2 cm}
\caption{A four-way relay channel}
\vspace{-0.8 cm}
\end{figure}

We consider the four-way wireless relaying scenario shown in Fig. 1, where four-way data transfer takes place among the nodes A, B, C and D with the help of the relay R assuming that the four nodes operate in half-duplex mode. The relaying protocol consists of two phases, \textit{multiple access} (MA) phase, consisting of one channel use during which A, B, C and D transmit to R; and \textit{broadcast} (BC) phase, in which R transmits to A, B, C and D in a single channel use. Network Coding is employed at R in such a way that A(/B/C/D) can decode B's, C's and D's(/A's, C's and D's /A's, B's and C's) messages, given that A(/B/C/D) knows its own message. 

Our physical layer network coding strategy uses a mathematical structure called a Latin Hyper-Cube defined as follows:
\begin{definition}
An \textit{n-fold Latin Hyper-Cube L of r-th order of side M} \cite{Kis} is an $M \times M \times ... \times M ~(n~times) $ array containing $M^n$ entries, $M^{n-r}$ of each of $M^r$ kinds, such that each symbol occurs at most once for each value taken by each dimension of the hyper-cube. \footnote{The definition has been modified slightly from the referred article ``On Latin and Hyper-Graeco-Latin Cubes and Hyper Cubes'' by K. Kishen (Current Science, Vol. 11, pp. 98--99, 1942), in accordance with the context.}
\end{definition}

For our purposes, we use only 4-fold Latin Hyper-Cubes of side M on the symbols from the set $\mathbb{Z}_{t}=\left\{0,1,2,...,t-1\right\}$, $t \geq M^3$. In this hyper-cube, fixing the first dimension, that represents A's transmitted symbol, we get a three dimensional array, which we will be referring to as cube;  then B(/C/D)'s transmitted symbols are along the files(/rows/columns) of each cube. 

\section{Signal Model}
\noindent \textbf{\textit{Multiple Access (MA) Phase:}}\\
\indent Suppose A(/B/C/D) wants to send a 2-bit binary tuple to B, C and D(/A, C and D/A, B and D/A, B and C). The symmetric 4-PSK constellation $\left\{\pm 1,~\pm j\right\}$ denoted by $\mathcal{S}$ as shown in Fig. 2, is used at A, B, C and D, and $ \mu : \mathbb{F}^{2}_{2} \rightarrow \mathcal{S} $ denotes the map from bits to complex symbols used at A, B, C and D where $\mathbb{F}_{2}=\left\{0,1\right\}$. Let $ x_{A}=\mu\left(s_{A}\right), x_{B}=\mu\left(s_{B}\right), x_{C}=\mu\left(s_{C}\right), x_{D}=\mu\left(s_{D}\right) \in \mathcal{S}$ denote the complex symbols transmitted by A, B, C and D respectively, where $s_{A}, s_{B}, s_{C}, s_D \in \mathbb{F}^{2}_{2}$. We assume that the Channel State Information (CSI) is not available at the transmitting nodes and perfect CSI is available at the receiving nodes. The received signal at R in the MA phase is given by,
{\vspace{-.1 cm}
\begin{equation}
\label{yr}
Y_{R}=H_{A}x_{A}+H_{B}x_{B}+H_{C}x_{C}+H_{D}x_{D}+Z_{R},
\vspace{-0.1 cm}\end{equation}}where $H_{A}$, $H_{B}$, $H_C$ and $H_{D}$ are the fading coefficients associated with the A-R, B-R, C-R and D-R link respectively. The additive noise $Z_{R} \sim \mathcal{CN}\left(0,\sigma^2 \right)$, where $\mathcal{CN}\left(0,\sigma^2 \right)$ denotes the circularly symmetric complex Gaussian random variable with variance $\sigma^2$.

\begin{figure}[tp]
\center
\includegraphics[height=35mm]{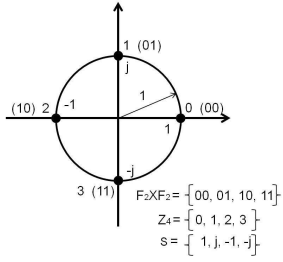}
\vspace{-.2 cm}
\caption{4-PSK constellation}
\vspace{-.5 cm}
\end{figure}
 
Let the effective constellation seen at the relay during the MA phase channel use be denoted by $ \mathcal{S}_{R} \left( H_A, H_B, H_C, H_D \right)$, i.e., 

{\footnotesize $\mathcal{S}_{R} \left( H_A, H_B, H_C, H_D \right) = \left\{H_A x_{A} + H_B x_{B} + H_C x_{C}+ H_D x_{D}| \right. $ \\
$ \left. ~~~~~~~~~~~~~~~~~~~~~~~~~~~~~~~~~~~~~~~~~~~~~~~~~~~~~~~~~~~ x_{A}, x_{B}, x_{C}, x_D \in \mathcal{S}\right\}.$}

The minimum distance between the points in the constellation $ \mathcal{S}_{R} \left( H_A, H_B, H_C, H_D \right) $ denoted by $d_{min}\left(H_A, H_B, H_C, H_D\right)$ is given in (\ref{dist}) on the next page, where $ \mathcal{S}^{n}=\mathcal{S} \times \mathcal{S} \times .. \times \mathcal{S} \ (n~times)$. From (\ref{dist}), it is clear that there exists values of $(H_A, H_B, H_C, H_D)$, for which $d_{min}\left(H_A, H_B, H_C, H_D\right)=0$. Let $\mathcal{H}=\left\{ (H_A, H_B, H_C, H_D) \in \mathbb{C}^4 | d_{min}\left(H_A, H_B, H_C, H_D\right)=0 \right\}$. The elements of $\mathcal{H}$ are called singular fade states. For singular fade states, $\left|\mathcal{S}_{R} \left( H_A, H_B, H_C, H_D \right)\right| < 4^{4}$.

\begin{definition}
A fade state $(H_A, H_B, H_C, H_D)$ is defined to be a \textit{singular fade state} for the MA phase of four-way relaying, if the cardinality of the signal set $ \mathcal{S}_{R} \left( H_A, H_B, H_C, H_D \right)$ is less than $4^{4}$. 
\end{definition}

Let the Maximum Likelihood (ML) estimate of $\left(x_{A}, x_{B}, x_{C}, x_D\right) $ be denoted by $\left(\hat{x}_{A}, \hat{x}_{B}, \hat{x}_{C}, \hat{x}_D\right) \in \mathcal{S}^{4}$ at R based on the received complex number $Y_{R}$, i.e., \vspace{-.4 cm}

{\footnotesize
\begin{equation}
\vspace{-.3 cm}
\left(\hat{x}_{A}, \hat{x}_{B},\hat{x}_{C}, \hat{x}_D\right)=\arg \min_{\left({x_{A}}, {x_{B}}, {x_{C}}, x_D\right) \in \mathcal{S}^{4}}\left\|Y_R - HX\right\|, 
\end{equation}}where $ H=\left[H_{A}\  H_{B}\  H_{C} \ H_D \right]$  and $ X=\left[ x_{A}\ x_{B}\ x_{C}\ x_D \right]^T.$

\begin{figure*}
\footnotesize
\begin{align}
&
\label{dist}
d_{min}(H_A, H_B, H_C, H_D)=\hspace{-0.5 cm}\min_{\substack {{(x_A,x_B,x_C,x_D),(x'_A,x'_B,x'_C,x'_D) \in \mathcal{S}^{4}} \\ {(x_A,x_B,x_C,x_D) \neq (x'_A,x'_B,x'_C,x'_D)}}}\hspace{-0.5 cm}\vert H_A \left(x_A-x'_A\right)+H_B \left(x_B-x'_B\right) + H_C \left(x_C-x'_C\right) + H_D \left(x_D-x'_D\right)\vert \\
\hline
\vspace{1cm}
&
\label{cl1}
d_{min}^{\mathcal{L}_{i},\mathcal{L}_{j}}\left(H_A, H_B, H_C, H_D\right)=\hspace{-0.2 cm}\min_{\substack {{(x_A,x_B,x_C,x_D) \in \mathcal{L}_{i}},\\ (x'_A,x'_B,x'_C,x'_D) \in \mathcal{L}_{j}}} \hspace{-0.2 cm}  \left| H_A \left( x_A-x'_A\right)+ H_B \left(x_B-x'_B\right) + H_C \left(x_C-x'_C\right) + H_D \left(x_D-x'_D\right) \right| \\
\hline
\vspace{1cm}
&
\label{cl2}
d_{min} \left(\mathcal{C}^{H_A, H_B, H_C, H_D}\right)=\hspace{-1.7 cm}\min_{\substack {{(x_A,x_B,x_C,x_D),(x'_A,x'_B,x'_C, x'_D) \in \mathcal{S}^{4},} \\ {\mathcal{M}^{H_A, H_B, H_C, H_D}(x_A,x_B,x_C,x_D) \neq \mathcal{M}^{H_A, H_B, H_C,H_D}(x'_A,x'_B,x'_C,x'_D)}}}\hspace{-1.7 cm} \left| H_A \left( x_A-x'_A\right)+ H_B \left(x_B-x'_B\right) + H_C \left(x_C-x'_C\right) + H_D \left(x_D-x'_D\right) \right| \\
\hline
\vspace{0.cm}
&
\label{cl3}
d_{min} \left(\mathcal{C}^{\left\{\left(H_A, H_B, H_C,H_D\right)\right\}} , h_A, h_B, h_C, h_D \right)=\hspace{-2.9 cm}\min_{\substack {\vspace{0.15cm} {(x_A,x_B,x_C,x_D),(x'_A,x'_B,x'_C,x'_D) \in \mathcal{S}^{4},} \\ {\mathcal{M}^{H_A,H_B,H_C,H_D}(x_A,x_B,x_C,x_D) \neq \mathcal{M}^{H_A, H_B, H_C,H_D}(x'_A,x'_B,x'_C,x'_D)}}}\hspace{-2.6 cm}  \left| h_A \left( x_A-x'_A\right)+ h_B \left(x_B-x'_B\right) + h_C \left(x_C-x'_C\right) + h_D \left(x_D-x'_D\right)\right|\\
\hline
\vspace{0.cm}
&
\label{mel1}
\mathcal{M}^{H_A,H_B,H_C,H_D}\left(x_{A},x_{B},x_{C},x_D\right) \neq \mathcal{M}^{H_A,H_B,H_C,H_D}\left(x_{A},x'_{B},x'_{C},x'_D\right), \forall x_{A}, x_{B}, x'_{B}, x_{C}, x'_{C}, x_{D}, x'_{D}  \in \mathcal{S}, \left(x_{B},x_{C},x_D\right) \neq \left(x'_{B},x'_{C},x'_D\right)\\
\vspace{0. cm}
&
\label{mel2}
\mathcal{M}^{H_A,H_B,H_C,H_D}\left(x_{A},x_{B},x_{C},x_D\right) \neq \mathcal{M}^{H_A,H_B,H_C,H_D}\left(x'_{A},x_{B},x'_{C},x'_D\right), \forall x_{A}, x'_{A}, x_{B}, x_{C}, x'_{C}, x_{D}, x'_{D}  \in \mathcal{S}, \left(x_{A},x_{C},x_D\right) \neq \left(x'_{A},x'_{C},x'_D\right)\\
\vspace{0. cm}
& 
\label{mel3}
\mathcal{M}^{H_A,H_B,H_C,H_D}\left(x_{A},x_{B},x_{C},x_D\right) \neq \mathcal{M}^{H_A,H_B,H_C,H_D}\left(x'_{A},x'_{B},x_{C},x'_D\right), \forall x_{A}, x'_{A}, x_{B}, x'_{B}, x_{C}, x_{D}, x'_{D}  \in \mathcal{S}, \left(x_{A},x_{B},x_D\right) \neq \left(x'_{A},x'_{B},x'_D\right)\\
\vspace{0.cm}
& 
\label{mel4}
\mathcal{M}^{H_A,H_B,H_C,H_D}\left(x_{A},x_{B},x_{C},x_D\right) \neq \mathcal{M}^{H_A,H_B,H_C,H_D}\left(x'_{A},x'_{B},x'_{C},x_D\right), \forall x_{A}, x'_{A}, x_{B}, x'_{B}, x_{C}, x_{D}, x'_{D}  \in \mathcal{S}, \left(x_{A},x_{B},x_C\right) \neq \left(x'_{A},x'_{B},x'_C\right)\\
\hline
\nonumber
\end{align}
\end{figure*}\hspace{-.1 cm}

\noindent \textbf{\textit{Broadcast (BC) Phase:}}\\
\indent During the BC phase, the received signals at A, B, C and D are respectively given by, \vspace{-.3 cm}

{\tiny \begin{equation}
\vspace{-.4 cm} Y_{A}=H'_{A}X_{R}+Z_{A},\ Y_{B}=H'_{B}X_{R}+Z_{B},\ Y_{C}=H'_{C}X_{R}+Z_{C}, Y_D=H'_{D}X_{R} +Z_D
\end{equation}}

\noindent where $X_{R}=\mathcal{M}^{H_A, H_B, H_C, H_D}\left(\left(\hat{x}_{A},\hat{x}_{B},\hat{x}_{C},\hat{x}_{D}\right)\right) \in \mathcal{S}^{'}$ denotes the complex number transmitted by R and $H_{A}^{'},$ $H_{B}^{'},$ $H_{C}^{'},$ and $H_{D}^{'}$ respectively are the fading coefficients corresponding to the links R-A, R-B, R-C and R-D. The additive noises $Z_{A},$ $Z_{B},$ $Z_{C}$ and $Z_{D}$ are  $\mathcal{CN}\left(0,\sigma^{2}\right)$. During \textit{BC} phase, R transmits a point from a signal set $\mathcal{S}^{'}$ given by a many to one map $\mathcal{M}^{H_A, H_B, H_C, H_D} : \mathcal{S}^4 \rightarrow \mathcal{S}^{'} $ chosen by R, depending on the values of $H_A$, $H_B$, $H_C$ and $H_D$. The cardinality of $\mathcal{S}^{'} \geq 64=2^6$, since 6 bits about the other three users needs to be conveyed to each of A, B, C and D.

A \textit{cluster} is the set of elements in $\mathcal{S}^4 $ which are mapped to the same signal point in $\mathcal{S}^{'}$ by the map $\mathcal{M}^{H_A, H_B, H_C, H_D}$. Let $\mathcal{C}^{H_A, H_B, H_C, H_D}=\left\{\mathcal{L}_{1}, \mathcal{L}_{2},.., \mathcal{L}_{l}\right\}$ denote the set of all such clusters.

\begin{definition}
The \textit{cluster distance} between clusters $\mathcal{L}_i, ~\mathcal{L}_j \in \mathcal{C}^{H_A, H_B, H_C, H_D}$, as given in (\ref{cl1}) on the next page, is the minimum among all the distances calculated between the points $\left(x_{A}, x_{B}, x_{C}, x_D\right) \in \mathcal{L}_{i}$ and $\left(\acute{x_{A}}, \acute{x_{B}}, \acute{x_{C}}, \acute{x_{D}}\right) \in \mathcal{L}_{j}$ in the effective constellation seen at the relay node R. The minimum among all the cluster distances among all pairs of the clustering $\mathcal{C}^{H_A, H_B, H_C, H_D}$ is its \textit{minimum cluster distance}, as given in (\ref{cl2}) on the next page.
\end{definition}

The performance during the MA phase depends on the minimum cluster distance, while the performance during the BC phase is dependent on the minimum distance of the signal set $\mathcal{S}^{'}$. A phenomenon referred to as \textit{distance shortening}, given in [4], is described as the significant reduction in the value of $d_{min}\left(\mathcal{C}^{H_A, H_B, H_C, H_D}\right)$ for values of $\left(H_A, H_B, H_C, H_D\right)$ in the neighborhood of the singular fade states. If the clustering used at the relay node R in the BC phase is chosen such that the minimum cluster distance at the singular fade state is non zero and is also maximized, then the effect of distance shortening can be avoided.

We say a clustering $\mathcal{C}^{H_A, H_B, H_C, H_D}$ removes a singular fade state $\left(H_A, H_B, H_C, H_D\right) \in \mathcal{H}, $ if $d_{min}\left(\mathcal{C}^{H_A, H_B, H_C, H_D}\right)>0$. Let $\mathcal{C}^{\left\{\left(H_A, H_B, H_C, H_D\right)\right\}} $ denote the clustering which removes the singular fade state $\left(H_A, H_B, H_C, H_D\right)$ (selecting one randomly if there are multiple clusterings which remove the same singular fade state $\left(H_A, H_B, H_C, H_D\right)$). Let the set of all such clusterings be denoted by $\mathcal{C_{H}}$, i.e., $\mathcal{C_{H}}=\left\{\mathcal{C}^{\left\{\left(H_A, H_B, H_C, H_D\right)\right\}} : \left(H_A, H_B, H_C, H_D\right)\in \mathcal{H}\right\} $.

\begin{definition}
The minimum cluster distance of the clustering $\mathcal{C}^{\left\{\left(H_A, H_B, H_C, H_D\right)\right\}}$ for $\left(H_A,H_B,H_C,H_D\right) \in \mathcal{H}$, when the fade state $(h_A, h_B, h_C,h_D)$ occurs in the MA phase, denoted by $d_{min}\left(\mathcal{C}^{\left\{\left(H_A, H_B, H_C, H_D\right)\right\}},h_A, h_B, h_C, h_D\right)$, is the minimum among all its cluster distances.
\end{definition}

When $\left(h_A, h_B, h_C, h_D\right) \notin \mathcal{H}, $ we choose the clustering $\mathcal{C}^{h_A, h_B, h_C, h_D} $ to be $\mathcal{C}^{\left\{\left(H_A, H_B, H_C, H_D\right)\right\}} \in \mathcal{C}_{\mathcal{H}}$, such that {\footnotesize $ d_{min}\left(\mathcal{C}^{\left\{\left(H_A, H_B, H_C, H_D\right)\right\}},h_A, h_B, h_C, h_D\right) \geq d_{min}\left(\mathcal{C}^{\left\{\left(H'_A, H'_B, H'_C, H'_D\right)\right\}},h_A, h_B, h_C, h_D\right), \forall \left(H_{A}, H_{B}, H_C, H_D\right) \neq \left(H'_{A}, H'_{B}, H'_{C}, H'_D\right) \in \mathcal{H}$}.
The clustering used by the relay is indicated to A, B, C and D using overhead bits.

In order to ensure that A(/B/C/D) is able to decode B's, C's and D's(/A's, C's and D's /A's, B's and D's /A's, B's and C's) message, the clustering $\mathcal{C}$ should satisfy the exclusive law, as given in (\ref{mel1}), (\ref{mel2}), (\ref{mel3}), (\ref{mel4}) on the next page. We explain Exclusive Law in more detail in the next section.

The contributions of this paper are as follows: 
\begin{itemize}
\item Exchange of information in the wireless four-way relaying scenario is made possible with totally two channel uses using our proposed scheme.
\item It is shown that if the four users A, B, C, D transmit points from the same M-PSK constellation, the requirement of satisfying the exclusive law is same as the clustering being represented by a 4-fold Latin Hyper-Cube of side M. (Section III)
\item The singular fade states for the four-way relaying scenario are identified. (Section IV)
\item Clusterings that remove these singular fade states are obtained and are of size between 64 to 90. (Section V)
\item Simulation results are provided to verify that the adaptive clustering as obtained in the paper indeed performs better than non-adaptive clustering at high SNR. (Section VI) 
\end{itemize}

The remaining content is organized as follows: Section III demonstrates how a 4-fold Latin Hyper-Cube of side 4 can be utilized to represent the network code for four user communication. In Section IV we define what we mean by singular fade subspaces and in Section V, focus in on the removal of such singular fade subspaces using 4-fold Latin Hyper-Cube of side 4. Simulation results are provided in Section VI. Section VII concludes the paper. 

\section{The Exclusive Law and Latin Hyper-Cubes}

In the previous section, it was stated that in order to ensure that A(/B/C/D) is able to decode B's, C's and D's(/A's, C's and D's /A's, B's and D's /A's, B's and C's) message, the clustering $\mathcal{C}$ that represents the map used at the relay should satisfy the exclusive law. We assume that the nodes A, B, C and D transmit symbols from the 4-PSK constellation. Consider a $4 \times 4 \times 4 \times 4 $ array, containing $4^4=256$ entries indexed by $\left(x_{A}, x_{B}, x_{C}, x_D\right)$, i.e., the four symbols sent by A, B, C and D in the MA phase. The four cubes of this $4 \times 4 \times 4 \times 4$ array, are indexed by four values taken by $x_{A}$. Each file (/row /column) of each such cube is indexed by a value of $x_B$ (/$x_C$ /$x_D$), for this fixed value of $x_A$. The repetition of a symbol in a cube results in the failure of exclusive law given by (\ref{mel1}). Considering the $4 \times 4 \times 4$ array with its files being the first(/second/third/forth) files of the $4 \times 4 \times 4 \times 4$ array. Each $4 \times 4 \times 4$ array so obtained, corresponds to a single value of $x_{B}$. A repetition of a symbol in this array will result in the failure of exclusive law given by (\ref{mel2}). Similarly, a repetition of a symbol in the $4 \times 4\times 4$ array with its rows(columns) being the first(/second/third/forth) columns of the $4 \times 4 \times 4 \times 4 $ array, that corresponds to a single value of $x_{C} (x_D)$, will result in the failure of exclusive law given by (\ref{mel3})((\ref{mel4})). Thus, for the exclusive law to be satisfied, the cells of this array should be filled such that the $4 \times 4 \times 4 \times 4$ array so obtained, is a 4-fold Latin Hyper-Cube of side 4, for $t \geq 64$ (Definition 1). The clusters are obtained by putting together all the tuples $\left(i,j,k,l\right), i,j,k,l \in 0,1,...t-1 $ such that the entry in the $\left(i,j,k,l\right)$-th slot is the same entry from $\mathbb{Z}_{t}$.

\begin{figure}[tp]
\center
\vspace{.2cm}
\includegraphics[height=30mm]{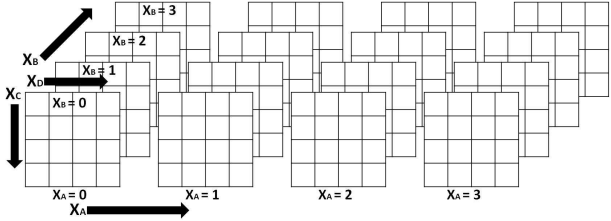}
\vspace{-.4cm}
\caption{A 4-fold Latin Hyper-Cube of side 4 represents the exclusive law constraint for the relay map when 4-PSK is used at end nodes}
\vspace{-.5cm}
\end{figure}

The same can be extended to an M-PSK constellation, i.e., if the network code used by the relay node in the BC phase satisfies exclusive law, it can be represented by a 4-fold Latin Hyper-Cube of side M and $t \geq M^3$ with cubes(files/rows/columns) being indexed by the constellation point used by A(/B/C/D), symbols from the set $ \mathbb{Z}_{M}$ (Fig. 3). Any arbitrary but unique symbol from $\left\{\mathcal{L}_1,..,\mathcal{L}_t\right\}$ denotes a unique cluster of a particular clustering.

\section{Singular Fade Subspaces}

As stated in Section II, a clustering $\mathcal{C}^{H_A, H_B, H_C, H_D}$ is said to remove singular fade state $\left(H_A, H_B, H_C, H_D\right) \in \mathcal{H}, $ if $d_{min}\left(\mathcal{C}^{H_A, H_B, H_C, H_D}\right)>0$, i.e., any two message sequences $\left(x_{A},x_{B},x_{C},x_D\right) \in \mathcal{S}^4$ that coincide in the effective constellation received at the relay during the MA phase should be kept in the same cluster of $\mathcal{C}^{H_A, H_B, H_C, H_D}$. So, removing singular fade states for a four-way relay channel can alternatively be defined as:
\begin{definition}
A clustering $\mathcal{C}^{H_A, H_B, H_C, H_D}$ is said to \textit{remove the singular fade state} $\left(H_A, H_B, H_C, H_D\right) \in \mathcal{H}$, if any two possibilities of the messages sent by the users $\left(x_{A},x_{B},x_{C},x_D\right), \left(x'_{A},x'_{B},x'_{C},x'_D\right) \in \mathcal{S}^{4}$ that satisfy

{\footnotesize  $$ H_A x_A+ H_Bx_B+ H_C x_C+H_D x_D=H_A x'_A+ H_B x'_B+ H_C x'_C +H_D x'_D$$} are placed together in the same cluster by the clustering.
\end{definition}

\begin{definition}
A set $\left\{(x_A, x_B, x_C,x_D)\right\} \in \mathcal{S}^4 $ consisting of all the possibilities of $(x_A, x_B, x_C, x_D)$ that must be placed in the same cluster of the clustering used at relay node R in the BC phase in order to remove the singular fade state $\left(H_A, H_B, H_C, H_D\right)$ is referred to as a \textit{Singularity Removal Constraint} for the fade state $\left(H_A, H_B, H_C, H_D\right)$ for four-way relaying scenario.
\end{definition}

The relay receives a complex number, given by (\ref{yr}), at the end of MA phase. Using the ML estimate of this received complex number, R transmits a point from the constellation $\mathcal{S}'$ with cardinality at most $4^{4}$. Instead of R transmitting a point from the $4^{4}$ point constellation resulting from all the possibilities of $\left(x_{A}, x_{B}, x_{C}, x_D\right)$, depending on the fade states, the relay R can choose to group these possibilities into clusters represented by a smaller constellation, so that the minimum cluster distance is minimized, as well as all the users receive the messages from the other three users, i.e., the clustering satisfies the exclusive law. We provide one such clustering for the case of four-way relaying in the following. 

Let $(H_A, H_B ,H_C, H_D)$ be the fade coefficient in the MA phase. Suppose $(H_A, H_B, H_C, H_D) $ is a singular fade state, and ${\Gamma}$ is a singularity removal constraint corresponding to the singular fade state $(H_A, H_B, H_C, H_D) $. Then there exist $(x_A, x_B, x_C, x_D), (x'_A, x'_B, x'_C,x'_D) \in \Gamma $, $(x_A, x_B, x_C, x_D) \neq (x'_A, x'_B, x'_C, x'_D) $ such that, 

{\tiny
\vspace{-.2 cm}
\begin{align}
\nonumber
& H_A x_A+ H_Bx_B+ H_C x_C+ H_D x_D=H_A x'_A+ H_B x'_B+ H_C x'_C + H_D x'_D \\
\nonumber
\Rightarrow & H_A (x_A-x'_A)+ H_B (x_B-x'_B)+ H_C (x_C-x'_C)+ H_D (x_D-x'_D)=0 \\
\Rightarrow & (H_A, H_B, H_C, H_D) \in 
\left\langle  \left[ {\begin{array}{cc}
x_{A}-x'_A \\
x_{B}-x'_B \\
x_{C}-x'_C \\
x_{D}-x'_D \\
\end{array} } \right] \right\rangle ^{\bot} 
\label{nullspace} 
\vspace{-.2 cm}
\end{align}}where for a $4 \times 1$ non-zero vector $v$ over $\mathbb{C}$,

{
\scriptsize
\vspace{-.2 cm}
\begin{equation}
\left\langle v\right\rangle ^{\bot}=\left\{w=(w_1,w_2,w_3,w_4) \in \mathbb{C}^4 ~|~ w_1v_1+w_2v_2+w_{3} v_{3}+w_4 v_4=0 \right\}.
\vspace{-.1 cm}
\end{equation}}
It can be proven that $\left\langle v\right\rangle ^{\bot}$ is a three-dimensional vector subspace of $\mathbb{C}^4$. Since $x_A, x_B, x_C, x_D, x'_A, x'_B, x'_C, x'_D \in \mathcal{S}$, where $\mathcal{S}$ is finite, there are only finitely many possibilities for the right-hand side of (\ref{nullspace}). Thus the uncountably infinite singular fade states $(H_A, H_B, H_C, H_D)$, are points in a finite number of vector subspaces of $\mathbb{C}^4$. We shall refer to these finite number of vector subspaces as the \textit{Singular Fade Subspaces} \cite{SVR}. There are four possibilities of singular fade subspaces for four-way relaying as we explain individually in the following. 

\noindent \textbf{\textit{Case 1:}} Only one of $x_A-x'_A$, $x_B-x'_B$, $x_C-x'_C$ or $x_D-x'_D$ is non-zero, resulting in the following 4 subcases:
\begin{enumerate}
	\item $x_B=x'_B,~ x_C=x'_C,~ x_D=x'_D \text{~and~} x_A \neq x'_A$
	\item $x_A=x'_A,~ x_C=x'_C,~ x_D=x'_D \text{~and~} x_B \neq x'_B$
	\item $x_A=x'_A,~ x_B=x'_B,~ x_D=x'_D \text{~and~} x_C \neq x'_C$
	\item $x_A=x'_A,~ x_B=x'_B,~ x_C=x'_C \text{~and~} x_D \neq x'_D$
\end{enumerate}
\textbf{\textit{Case 2:}} Two of $x_A-x'_A$, $x_B-x'_B$, $x_C-x'_C$ or $x_D-x'_D$ are non-zero, resulting in the following 6 subcases:
\begin{enumerate}
	\item $x_A=x'_A,~ x_B = x'_B, ~x_C \neq x'_C\text{~and~} x_D \neq x'_D$
	\item $x_A=x'_A,~ x_C = x'_C, ~x_B \neq x'_B\text{~and~} x_D \neq x'_D$
	\item $x_A=x'_A,~ x_D = x'_D, ~x_B \neq x'_B\text{~and~} x_C \neq x'_C$
	\item $x_B=x'_B,~ x_C = x'_C, ~x_A \neq x'_A\text{~and~} x_D \neq x'_D$
	\item $x_B=x'_B,~ x_D = x'_D, ~x_A \neq x'_A\text{~and~} x_C \neq x'_C$
	\item $x_C=x'_C,~ x_D = x'_D, ~x_A \neq x'_A\text{~and~} x_B \neq x'_B$
\end{enumerate}
\textbf{\textit{Case 3:}} Three of $x_A-x'_A$, $x_B-x'_B$, $x_C-x'_C$ or $x_D-x'_D$ are non-zero, resulting in the following 4 subcases:
\begin{enumerate}
	\item $x_A=x'_A,~ x_B \neq x'_B, ~x_C \neq x'_C\text{~and~} x_D \neq x'_D$
	\item $x_A\neq x'_A,~ x_B = x'_B, ~x_C \neq x'_C\text{~and~} x_D \neq x'_D$
	\item $x_A\neq x'_A,~ x_B \neq x'_B, ~x_C =x'_C\text{~and~} x_D \neq x'_D$
	\item $x_A \neq x'_A,~ x_B \neq  x'_B, ~x_C \neq x'_C\text{~and~} x_D = x'_D$
\end{enumerate}
\textbf{\textit{Case 4:}} All of $x_A-x'_A$, $x_B-x'_B$, $x_C-x'_C$ or $x_D-x'_D$ are non-zero, i.e., $x_A \neq x'_A,~ x_B \neq x'_B,~ x_C \neq x'_C \text{~and~} x_D \neq x'_D$.

\vspace{.5cm}
\noindent \textbf{\textit{Case 1:}} Only one of $x_A-x'_A$, $x_B-x'_B$, $x_C-x'_C$ or $x_D-x'_D$ is non-zero, resulting in 4 subcases.
Without loss of generality, we discuss the first subcase of \textit{Case 1}. The singular fade subspace in this case will be given by 
{\footnotesize \vspace{-0 cm}$\mathcal{S}'=\left\langle \left[ {\begin{array}{cc}
x_{A}-x'_A \\
0 \\
0 \\
0 \\
\end{array} } \right]\right\rangle ^{\bot}.$} Let $\mathcal{D}$ denote the set of differences of the points of $\mathcal{S}$, i.e., $\mathcal{D}=\left\{x_i-x_j ~|~ x_i, x_j \in \mathcal{S} \right\} = \left\{0, \pm 1\pm j, \pm2j, \pm 2\right\}. $ We can write, $\mathcal{D}=\left\{0\right\} \cup \mathcal{D}_{1} \cup \mathcal{D}_{2},$ where $\mathcal{D}_{1}=\left\{ \pm1\pm j\right\}$ and $\mathcal{D}_{2}=\left\{ \pm2j, \pm2\right\}$. 

Now, $x_A- x'_A \in \mathcal{D}$ can take eight non-zero values. So there are eight total possibilities for the vector $ \left[x_A-x'_A, ~0, ~0, ~0\right]^t $, where $v^t$ denotes the transpose of a vector $v$. Since each one of $\pm 1\pm j$, $ \pm2j $ and $ \pm2$ can be obtained as scalar multiples of $1+j$ (over $\mathbb{C}$), we have,\\

{\footnotesize
\vspace{-.7cm}
\begin{align}
\nonumber
(-1) \left[ 1+j, ~0, ~0, ~0 \right]^T&=\left[ -1-j, ~ 0, ~ 0, ~ 0 \right]^T\\
\nonumber
(j) \left[ 1+j, ~ 0, ~ 0, ~ 0 \right]^T&=\left[ -1+j, ~ 0, ~ 0, ~ 0 \right]^T\\
\nonumber
(-j) \left[ 1+j, ~ 0, ~ 0, ~ 0 \right]^T&=\left[ 1-j, ~ 0, ~ 0, ~ 0 \right]^T\\
\nonumber
(1+j) \left[ 1+j, ~ 0, ~ 0, ~ 0 \right]^T&=\left[ 2j, ~ 0, ~ 0, ~ 0 \right]^T\\
\nonumber
(-1-j) \left[ 1+j, ~ 0, ~ 0, ~ 0 \right]^T&=\left[ -2j, ~ 0, ~ 0, ~ 0 \right]^T\\
\nonumber
(1-j) \left[ 1+j, ~ 0, ~ 0, ~ 0 \right]^T&=\left[ 2, ~ 0, ~ 0, ~ 0 \right]^T\\
\nonumber
(-1+j) \left[ 1+j, ~ 0, ~ 0, ~ 0 \right]^T&=\left[ -2, ~ 0, ~ 0, ~ 0 \right]^T
\nonumber
\end{align}}Therefore,\\

\vspace{-.2cm}
\begin{tiny}$~~~~~~~~~~~~~~~~
\left\langle \left[ {\begin{array}{cc}
1+j \\
0 \\
0 \\
0\\
\end{array} } \right]\right\rangle
\hspace{-.05 cm}
=\hspace{-.05 cm}
\left\langle \left[ {\begin{array}{cc}
\pm 1\pm j \\
0 \\
0 \\
0\\
\end{array} } \right]\right\rangle
\hspace{-.05 cm}
=\hspace{-.05 cm}
\left\langle \left[ {\begin{array}{cc}
\pm 2j \\
0 \\
0 \\
0\\
\end{array} } \right]\right\rangle 
\hspace{-.05 cm}
=\hspace{-.05 cm}
\left\langle \left[ {\begin{array}{cc}
\pm 2 \\
0 \\
0 \\
0\\
\end{array} } \right]\right\rangle.$\end{tiny}
\vspace{.2cm}

As a result, only one singular fade subspace arises from this subcase, viz., {\footnotesize $\mathcal{S}'=\left\langle \left[ 1+j, ~ 0, ~ 0, ~ 0 \right]^T\right\rangle ^{\bot}. $} 
Similarly, for the other three subcases, there is a single singular fade subspace, resulting in a total of 4 singular fade subspaces arising from this case.

\begin{figure*}
\centering
\tiny
$1. \left\langle \left[ {\begin{array}{cc}
1+j \\
1+j \\
0 \\
0 \\
\end{array} } \right]\right\rangle
=\left\langle \left[ {\begin{array}{cc}
-1-j \\
-1-j \\
0 \\
0 \\
\end{array} } \right]\right\rangle
=\left\langle \left[ {\begin{array}{cc}
1-j \\
1-j \\
0 \\
0 \\
\end{array} } \right]\right\rangle 
=\left\langle \left[ {\begin{array}{cc}
-1+j \\
-1+j \\
0 \\
0 \\
\end{array} } \right]\right\rangle
=\left\langle \left[ {\begin{array}{cc}
2j \\
2j \\
0 \\
0 \\
\end{array} } \right]\right\rangle
=\left\langle \left[ {\begin{array}{cc}
-2j \\
-2j \\
0 \\
0 \\
\end{array} } \right]\right\rangle
=\left\langle \left[ {\begin{array}{cc}
2 \\
2 \\
0 \\
0 \\
\end{array} } \right]\right\rangle 
=\left\langle \left[ {\begin{array}{cc}
-2 \\
-2 \\
0 \\
0 \\
\end{array} } \right]\right\rangle $~~~~~~~~~~~~\\
$2. \left\langle \left[ {\begin{array}{cc}
1+j \\
-1-j \\
0 \\
0 \\
\end{array} } \right]\right\rangle
=\left\langle \left[ {\begin{array}{cc}
-1-j \\
1+j \\
0 \\
0 \\
\end{array} } \right]\right\rangle
=\left\langle \left[ {\begin{array}{cc}
-1+j \\
1-j \\
0 \\
0 \\
\end{array} } \right]\right\rangle 
=\left\langle \left[ {\begin{array}{cc}
1-j \\
-1+j \\
0 \\
0 \\
\end{array} } \right]\right\rangle
=\left\langle \left[ {\begin{array}{cc}
2j \\
-2j \\
0 \\
0 \\
\end{array} } \right]\right\rangle
=\left\langle \left[ {\begin{array}{cc}
-2j \\
2j \\
0 \\
0 \\
\end{array} } \right]\right\rangle
=\left\langle \left[ {\begin{array}{cc}
2 \\
-2 \\
0 \\
0 \\
\end{array} } \right]\right\rangle 
=\left\langle \left[ {\begin{array}{cc}
-2 \\
2 \\
0 \\
0 \\
\end{array} } \right]\right\rangle $\\
$3. \left\langle \left[ {\begin{array}{cc}
1+j \\
1-j \\
0 \\
0 \\
\end{array} } \right]\right\rangle
=\left\langle \left[ {\begin{array}{cc}
-1-j \\
-1+j \\
0 \\
0 \\
\end{array} } \right]\right\rangle
=\left\langle \left[ {\begin{array}{cc}
-1+j \\
1+j \\
0 \\
0 \\
\end{array} } \right]\right\rangle 
=\left\langle \left[ {\begin{array}{cc}
1-j \\
-1-j \\
0 \\
0 \\
\end{array} } \right]\right\rangle
=\left\langle \left[ {\begin{array}{cc}
2j \\
2 \\
0 \\
0 \\
\end{array} } \right]\right\rangle
=\left\langle \left[ {\begin{array}{cc}
-2j \\
-2 \\
0 \\
0 \\
\end{array} } \right]\right\rangle
=\left\langle \left[ {\begin{array}{cc}
-2 \\
2j \\
0 \\
0 \\
\end{array} } \right]\right\rangle 
=\left\langle \left[ {\begin{array}{cc}
2 \\
-2j \\
0 \\
0 \\
\end{array} } \right]\right\rangle  $ ~~~~\\
$ 4.\left\langle \left[ {\begin{array}{cc}
1+j \\
-1+j \\
0 \\
0 \\
\end{array} } \right]\right\rangle
=\left\langle \left[ {\begin{array}{cc}
-1-j \\
1-j \\
0 \\
0 \\
\end{array} } \right]\right\rangle
=\left\langle \left[ {\begin{array}{cc}
-1+j \\
-1-j \\
0 \\
0 \\
\end{array} } \right]\right\rangle 
=\left\langle \left[ {\begin{array}{cc}
1-j \\
1+j \\
0 \\
0 \\
\end{array} } \right]\right\rangle 
=\left\langle \left[ {\begin{array}{cc}
2j \\
-2 \\
0 \\
0 \\
\end{array} } \right]\right\rangle
=\left\langle \left[ {\begin{array}{cc}
-2j \\
2 \\
0 \\
0 \\
\end{array} } \right]\right\rangle
=\left\langle \left[ {\begin{array}{cc}
2 \\
2j \\
0 \\
0 \\
\end{array} } \right]\right\rangle 
=\left\langle \left[ {\begin{array}{cc}
-2 \\
-2j \\
0 \\
0 \\
\end{array} } \right]\right\rangle $ ~~~~\\
$ 5. \left\langle \left[ {\begin{array}{cc}
1+j \\
2j \\
0 \\
0 \\
\end{array} } \right]\right\rangle
=\left\langle \left[ {\begin{array}{cc}
-1-j \\
-2j \\
0 \\
0 \\
\end{array} } \right]\right\rangle
=\left\langle \left[ {\begin{array}{cc}
-1+j \\
2 \\
0 \\
0 \\
\end{array} } \right]\right\rangle 
=\left\langle \left[ {\begin{array}{cc}
1-j \\
-2 \\
0 \\
0 \\
\end{array} } \right]\right\rangle $ ~~~~
$6. \left\langle \left[ {\begin{array}{cc}
1+j \\
-2j \\
0 \\
0 \\
\end{array} } \right]\right\rangle
=\left\langle \left[ {\begin{array}{cc}
-1-j \\
2j \\
0 \\
0 \\
\end{array} } \right]\right\rangle
=\left\langle \left[ {\begin{array}{cc}
-1+j \\
-2 \\
0 \\
0 \\
\end{array} } \right]\right\rangle 
=\left\langle \left[ {\begin{array}{cc}
1-j \\
2 \\
0 \\
0 \\
\end{array} } \right]\right\rangle $ \\
$ 7. \left\langle \left[ {\begin{array}{cc}
1+j \\
2 \\
0 \\
0 \\
\end{array} } \right]\right\rangle
=\left\langle \left[ {\begin{array}{cc}
-1-j \\
-2 \\
0 \\
0 \\
\end{array} } \right]\right\rangle
=\left\langle \left[ {\begin{array}{cc}
-1+j \\
2j \\
0 \\
0 \\
\end{array} } \right]\right\rangle 
=\left\langle \left[ {\begin{array}{cc}
1-j \\
-2j \\
0 \\
0 \\
\end{array} } \right]\right\rangle $ ~~~~
$8. \left\langle \left[ {\begin{array}{cc}
1+j \\
-2 \\
0 \\
0 \\
\end{array} } \right]\right\rangle
=\left\langle \left[ {\begin{array}{cc}
-1-j \\
2 \\
0 \\
0 \\
\end{array} } \right]\right\rangle
=\left\langle \left[ {\begin{array}{cc}
-1+j \\
-2j \\
0 \\
0 \\
\end{array} } \right]\right\rangle 
=\left\langle \left[ {\begin{array}{cc}
1-j \\
2j \\
0 \\
0 \\
\end{array} } \right]\right\rangle $ \\
$9. \left\langle \left[ {\begin{array}{cc}
2j \\
1+j \\
0 \\
0 \\
\end{array} } \right]\right\rangle
=\left\langle \left[ {\begin{array}{cc}
-2j \\
-1-j \\
0 \\
0 \\
\end{array} } \right]\right\rangle
=\left\langle \left[ {\begin{array}{cc}
2 \\
-1+j \\
0 \\
0 \\
\end{array} } \right]\right\rangle 
=\left\langle \left[ {\begin{array}{cc}
-2 \\
1-j \\
0 \\
0 \\
\end{array} } \right]\right\rangle $ ~~~~
$ 10. \left\langle \left[ {\begin{array}{cc}
-2j \\
1+j \\
0 \\
0 \\
\end{array} } \right]\right\rangle
=\left\langle \left[ {\begin{array}{cc}
2j \\
-1-j \\
0 \\
0 \\
\end{array} } \right]\right\rangle
=\left\langle \left[ {\begin{array}{cc}
-2 \\
-1+j \\
0 \\
0 \\
\end{array} } \right]\right\rangle 
=\left\langle \left[ {\begin{array}{cc}
2 \\
1-j \\
0 \\
0 \\
\end{array} } \right]\right\rangle $ \\
$11. \left\langle \left[ {\begin{array}{cc}
2 \\
1+j \\
0 \\
0 \\
\end{array} } \right]\right\rangle
=\left\langle \left[ {\begin{array}{cc}
-2 \\
-1-j \\
0 \\
0 \\
\end{array} } \right]\right\rangle
=\left\langle \left[ {\begin{array}{cc}
2j \\
-1+j \\
0 \\
0 \\
\end{array} } \right]\right\rangle 
=\left\langle \left[ {\begin{array}{cc}
-2j \\
1-j \\
0 \\
0 \\
\end{array} } \right]\right\rangle $ ~~~
$12. \left\langle \left[ {\begin{array}{cc}
-2 \\
1+j \\
0 \\
0 \\
\end{array} } \right]\right\rangle
=\left\langle \left[ {\begin{array}{cc}
2 \\
-1-j \\
0 \\
0 \\
\end{array} } \right]\right\rangle
=\left\langle \left[ {\begin{array}{cc}
-2j \\
-1+j \\
0 \\
0 \\
\end{array} } \right]\right\rangle 
=\left\langle \left[ {\begin{array}{cc}
2j \\
1-j \\
0 \\
0 \\
\end{array} } \right]\right\rangle. $
\caption{Null Spaces of the Singular Fades Subspaces for the case $x_A \neq x'_A, ~ x_B \neq x'_B,~ x_C = x'_C \text{~and~} x_D=x'_D$.}
\label{fig_sfscase2}
\vspace{-.7 cm}
\end{figure*}
\vspace{.5cm}
\noindent \textbf{\textit{Case 2:}} Two of $x_A-x'_A$, $x_B-x'_B$, $x_C-x'_C$ or $x_D-x'_D$ are non-zero, resulting in 6 subcases.
Without loss of generality, we consider the case for which the singular fade subspace for this case is given by {\footnotesize \vspace{-0 cm}$\mathcal{S}''=\left\langle \left[ {\begin{array}{cc}
x_{A}-x'_A \\
x_B-x'_B \\
0 \\
0 \\
\end{array} } \right]\right\rangle ^{\bot}.$} Both $x_A- x'_A \text{~and~} x_B-x'_B \in \mathcal{D}$ can take eight non-zero values each. There are therefore, 64 total possibilities for the vector $ \left[x_A-x'_A, ~x_B-x'_B, ~0,~0\right]^t $.

\begin{lemma} Given {\scriptsize {$v= \left[x_A-x'_A, ~x_B-x'_B, ~x_C-x'_C, ~x_D-x'_D\right]^t$}} with entries from $\mathcal{D}_1 \cup \mathcal{D}_2$, when at least one of $x_A \neq x'_A,~  x_B \neq x'_B , ~ x_C \neq x'_C \text{~and~} x_D \neq x'_D$, there are precisely 4 or 8 vectors (including $v$) with entries from $\mathcal{D}_1 \cup \mathcal{D}_2$ that generate the same vector space over $\mathbb{C}$ as $v$.
\end{lemma}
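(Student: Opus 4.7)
The strategy is to recast the lemma as a scalar-counting problem and close with a short case analysis driven by the multiplicative structure of $E := \mathcal{D}_1 \cup \mathcal{D}_2$. Two nonzero vectors generate the same one-dimensional subspace of $\mathbb{C}^4$ iff one is a complex scalar multiple of the other, and since $v \neq 0$ this scalar is uniquely determined by the multiple; so the count I need is the number of $\alpha \in \mathbb{C}^{\ast}$ for which every entry of $\alpha v$ lies in the ground set $\{0\} \cup E$. Zero coordinates of $v$ remain zero under scaling, so the condition reduces to $\alpha v_i \in E$ for every $i$ in the support $I$ of $v$.

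The multiplicative facts I will use are as follows. Let $U = \{\pm 1, \pm j\}$ be the units of $\mathbb{Z}[j]$, so $\mathcal{D}_1 = (1+j)U$ and $\mathcal{D}_2 = 2U$. From the identities $(1+j)^2 = 2j$ and $2/(1+j) = 1-j$ one obtains $E \cdot U = E$; $\mathcal{D}_1/\mathcal{D}_1 = \mathcal{D}_2/\mathcal{D}_2 = U$; $\mathcal{D}_2/\mathcal{D}_1 = \mathcal{D}_1$; and $\mathcal{D}_1 \cdot \mathcal{D}_1 = \mathcal{D}_2$, whereas every element of $\mathcal{D}_2 \cdot \mathcal{D}_1 = 2(1+j)U$ has modulus $2\sqrt{2}$ and so lies outside $\{0\} \cup E$. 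Now pick any $i_0 \in I$. Since $\alpha v_{i_0} \in E$, $\alpha$ lies in the $8$-element set $E/v_{i_0}$, which is already the upper bound.

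I then split on the support of $v$. If $\{v_i : i \in I\}$ is \emph{monochromatic}, meaning either all in $\mathcal{D}_1$ or all in $\mathcal{D}_2$, then for every $i \in I$ the ratio $v_i/v_{i_0}$ lies in $U$, hence $\alpha v_i = (\alpha v_{i_0})(v_i/v_{i_0}) \in E \cdot U = E$, and all $8$ candidate scalars are admissible, giving count $8$. Otherwise $v$ has entries in both $\mathcal{D}_1$ and $\mathcal{D}_2$, and I may take $v_{i_0} \in \mathcal{D}_1$ with some $v_{i_1} \in \mathcal{D}_2$, so that $v_{i_1}/v_{i_0} \in \mathcal{D}_2/\mathcal{D}_1 = \mathcal{D}_1$. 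If $\alpha v_{i_0} \in \mathcal{D}_1$, then $\alpha v_{i_1} \in \mathcal{D}_1 \cdot \mathcal{D}_1 = \mathcal{D}_2 \subset E$; if $\alpha v_{i_0} \in \mathcal{D}_2$, then $\alpha v_{i_1} \in \mathcal{D}_2 \cdot \mathcal{D}_1$, which is outside $\{0\} \cup E$ by the modulus observation. So only the $4$ scalars with $\alpha v_{i_0} \in \mathcal{D}_1$ survive; for these, the constraint on any remaining $v_i$ (whether in $\mathcal{D}_1$ or $\mathcal{D}_2$) is automatically met by the same two identities, yielding count $4$.

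The main, mild, obstacle is assembling the multiplicative identities above, especially $\mathcal{D}_1 \cdot \mathcal{D}_1 = \mathcal{D}_2$ and $\mathcal{D}_2 \cdot \mathcal{D}_1 \not\subset \{0\} \cup E$, which are exactly what separates the monochromatic count of $8$ from the bichromatic count of $4$; once these are in place, the case analysis is mechanical and yields the claimed dichotomy.
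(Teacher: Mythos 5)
Your proof is correct, and it takes a genuinely different route from the paper's. The paper proves the lemma by splitting into three separate statements (Lemmas 2, 3 and 4) according to how many coordinates are nonzero, and in each one works in polar form: it writes the entries as $2\sin(\pi k_i/4)e^{j\phi_i}$, derives equality of sine ratios (invoking the cited reference to conclude the $k_i$'s match), and then counts phase tuples with a prescribed set of differences, asserting that exactly four such tuples exist; the $8$ vs.\ $4$ dichotomy emerges from whether all $k_i$ are equal. You instead exploit the arithmetic of $\mathbb{Z}[j]$: writing $\mathcal{D}_1=(1+j)U$ and $\mathcal{D}_2=2U$ with $U=\{\pm1,\pm j\}$, you reduce the problem to counting scalars $\alpha$ with $\alpha v_i\in\mathcal{D}_1\cup\mathcal{D}_2$ on the support, and the identities $\mathcal{D}_1\cdot\mathcal{D}_1=\mathcal{D}_2$, $\mathcal{D}_2/\mathcal{D}_1=\mathcal{D}_1$, $E\cdot U=E$ and $|\mathcal{D}_2\cdot\mathcal{D}_1|=2\sqrt2\notin\{\sqrt2,2\}$ give the monochromatic-versus-mixed dichotomy directly. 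Your version buys several things: it handles all support sizes uniformly (so the paper's three lemmas, including the one whose proof is ``omitted,'' follow at once), it replaces the ``it can be verified'' phase-counting step with a one-line closure argument, and it identifies explicitly \emph{which} $4$ or $8$ scalars occur (the $U$-multiples, resp.\ the $U\cup(1+j)U$-multiples), matching the paper's later subcase remarks in Cases 3 and 4. What the paper's trigonometric formulation buys in exchange is a path to general $M$-PSK, where the difference set is no longer two $U$-orbits in $\mathbb{Z}[j]$ and your multiplicative shortcut would not apply verbatim. One minor point: you correctly read the hypothesis as ``the nonzero entries lie in $\mathcal{D}_1\cup\mathcal{D}_2$'' (working on the support $I$), which is what the paper intends even though its statement, taken literally, would force every entry to be nonzero.
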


We partition and prove this Lemma in three parts: Lemma 2, Lemma 3, Lemma 4 given in Case 2, Case 3, and Case 4 respectively of this paper. 
\begin{lemma} When $x_A \neq x'_A,~  x_B \neq x'_B , ~ x_C=x'_C \text{~and~} x_D=x'_D$, for a given vector $v= \left[x_A-x'_A, ~x_B-x'_B, ~0, ~0\right]^t$ over $\mathcal{D}_1 \cup \mathcal{D}_2$, there are precisely 4 or 8 vectors (including $v$) over $\mathcal{D}_1 \cup \mathcal{D}_2$ that generate the same vector space over $\mathbb{C}$ as $v$.
\end{lemma}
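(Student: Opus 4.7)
Two nonzero vectors in $\mathbb{C}^4$ generate the same one-dimensional subspace iff one is a nonzero scalar multiple of the other. Hence counting vectors $w$ over $\mathcal{D}_1 \cup \mathcal{D}_2$ with $\langle w \rangle = \langle v \rangle$ is the same as counting scalars $\lambda \in \mathbb{C}^{*}$ such that both coordinates $\lambda(x_A - x'_A)$ and $\lambda(x_B - x'_B)$ lie in $\mathcal{D}_1 \cup \mathcal{D}_2$ (the trailing zeros impose no constraint). So the task reduces to a combinatorial count of admissible scalars.

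The key structural observation is that $\mathcal{D}_1 = (1+j)\,\mathcal{U}$ and $\mathcal{D}_2 = 2\,\mathcal{U}$, where $\mathcal{U} = \{1, j, -1, -j\}$ is the group of fourth roots of unity. First I would fix any $a \in \mathcal{D}_1 \cup \mathcal{D}_2$ and compute the set
\[
\Lambda(a) \;=\; \{\lambda \in \mathbb{C}^{*} : \lambda a \in \mathcal{D}_1 \cup \mathcal{D}_2\}.
\]
Using the fact that $2/(1+j) = 1-j$, a direct check gives $\Lambda(a) = \mathcal{U} \cup (1-j)\mathcal{U} = \{\pm 1, \pm j, \pm 1 \pm j\}$ when $a \in \mathcal{D}_1$, and $\Lambda(a) = \mathcal{U} \cup \tfrac{1}{1-j}\mathcal{U} = \{\pm 1, \pm j, \tfrac{\pm 1 \pm j}{2}\}$ when $a \in \mathcal{D}_2$. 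In either case $|\Lambda(a)| = 8$, and $\Lambda(a)$ depends only on which of $\mathcal{D}_1,\mathcal{D}_2$ contains $a$, not on the particular element.

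Next I would split into four cases according to whether each of $x_A - x'_A$ and $x_B - x'_B$ lies in $\mathcal{D}_1$ or $\mathcal{D}_2$. The admissible multipliers form $\Lambda(x_A - x'_A) \cap \Lambda(x_B - x'_B)$. When both differences lie in $\mathcal{D}_1$, or both lie in $\mathcal{D}_2$, the intersection equals the full 8-element set (yielding 8 equivalent vectors). In the two mixed cases, the sets $\mathcal{U} \cup (1-j)\mathcal{U}$ and $\mathcal{U} \cup \tfrac{1}{1-j}\mathcal{U}$ agree only on the unit group $\mathcal{U}$ (since elements of $(1-j)\mathcal{U}$ have modulus $\sqrt{2}$ whereas elements of $\tfrac{1}{1-j}\mathcal{U}$ have modulus $1/\sqrt{2}$), so the intersection is exactly $\mathcal{U}$, giving 4 equivalent vectors. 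Finally I would remark that different $\lambda$'s yield different vectors $\lambda v$ (because $v \neq 0$), so the count of scalars equals the count of vectors. Thus the total is always $4$ or $8$, as claimed.

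The only real obstacle is keeping the bookkeeping between $\Lambda(a)$'s transparent; once the presentation $\mathcal{D}_i = c_i \mathcal{U}$ is in place, the case analysis is mechanical, and the enumerated list in Fig.~\ref{fig_sfscase2} of the twelve representative vectors (four of type $\mathcal{D}_1 \times \mathcal{D}_1$ with 8 companions, eight of type $\mathcal{D}_1 \times \mathcal{D}_2$ or $\mathcal{D}_2 \times \mathcal{D}_1$ with 4 companions) furnishes an immediate consistency check.
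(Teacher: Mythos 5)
Your proposal is correct, and it reaches the same $4$-or-$8$ dichotomy by a route that is recognizably different in its mechanics from the paper's. Both arguments begin with the same reduction (two nonzero vectors span the same line iff one is a nonzero scalar multiple of the other, so one counts admissible scalars $\lambda$), but the paper then passes to the polar parametrization $2\sin(\pi k/4)e^{j\phi}$ of $\Delta\mathcal{S}$, divides the two component equations, uses a modulus argument (citing \cite{NMR}) to force $k_1=k_3$, $k_2=k_4$, and finally asserts that ``it can be verified'' that exactly four phase pairs $(\phi_3,\phi_4)$ realize a given difference $\phi_3-\phi_4$; its case split is $k_1=k_2$ versus $k_1\neq k_2$. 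You instead exploit the coset structure $\mathcal{D}_1=(1+j)\mathcal{U}$, $\mathcal{D}_2=2\mathcal{U}$ with $\mathcal{U}$ the fourth roots of unity, compute the admissible-scalar sets $\Lambda(a)$ explicitly (each of size $8$), and obtain the answer as $\bigl|\Lambda(x_A-x'_A)\cap\Lambda(x_B-x'_B)\bigr|$, with the moduli $1$, $\sqrt{2}$, $1/\sqrt{2}$ settling the intersection; your case split on membership in $\mathcal{D}_1$ versus $\mathcal{D}_2$ collapses to the paper's split on whether the two components share a modulus class. What your version buys is self-containedness and transparency: the ``four phase pairs'' claim and the appeal to \cite{NMR} are replaced by the group-theoretic identity $\lambda c_i\mathcal{U}=c_k\mathcal{U}\iff\lambda\in(c_k/c_i)\mathcal{U}$, and the final consistency check against the twelve classes of Fig.~\ref{fig_sfscase2} ($4$ classes of size $8$ absorbing the $32$ same-type vectors, $8$ classes of size $4$ absorbing the $32$ mixed-type vectors) is exactly right. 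The paper's phrasing, on the other hand, is the one that generalizes verbatim to $M$-PSK with $M>4$, where more than two modulus classes occur and the sine-ratio argument does the work that your two-coset decomposition does here.
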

\begin{proof} The difference constellation $\mathcal{D}= \Delta \mathcal{S}=\left\{s-s': s,s' \in \mathcal{S}\right\}$ where $\mathcal{S}$ is 4-PSK constellation, is given by \cite{NMR},
\begin{align}
\nonumber
\Delta \mathcal{S}=\left\{0\right\} & \cup \left\{2sin\left( \pi n / 4\right) e^{j k \pi / 2} | n \text{~odd}\right\}\\
\nonumber
&\cup \left\{2sin\left( \pi n / 4\right) e^{j\left( k \pi / 2 + \pi / 4\right)} | n \text{~even}\right\},
\end{align}
where $ 1 \leq n \leq 2 $ and $ 0 \leq k \leq 3 $. So, $v$ can also be written as,
$$v= \left[ {\begin{array}{cc}
x_{A}-x'_A \\
x_B-x'_B \\
0 \\
0 \\
\end{array} } \right] = \left[ {\begin{array}{cc}
\vspace{0.15cm}
2 sin \frac{ \pi k_{1}}{4} e^{j \phi_{1}} \\
\vspace{0.15cm}
2 sin \frac{ \pi k_{2}}{4} e^{j \phi_{2}} \\
0 \\
0 \\
\end{array} } \right] $$
where $\phi_{i}=k_{i} \pi /2$ if $k_{i}$ is odd and $\phi_{i}=k_{i} \pi /2 + \pi/4$ if $k_{i}$ is even. \\

A vector $w$ over $\mathcal{D}_1 \cup \mathcal{D}_2$ shall generate the same vector space over $\mathbb{C}$ iff $w$ is a scalar multiple of $v$, i.e., for some complex number $r e^{j \theta} \in \mathbb{C}$, 
$$v = r e^{j\theta} w \Rightarrow \left[ {\begin{array}{cc}
\vspace{0.15cm}
2 sin \frac{ \pi k_{1}}{4} e^{j \phi_{1}} \\
\vspace{0.15cm}
2 sin \frac{ \pi k_{2}}{4} e^{j \phi_{2}} \\
0 \\
0 \\
\end{array} } \right] = r e^{j\theta} \left[ {\begin{array}{cc}
\vspace{0.15cm}
2 sin \frac{ \pi k_{3}}{4} e^{j \phi_{3}} \\
\vspace{0.15cm}
2 sin \frac{ \pi k_{4}}{4} e^{j \phi_{4}} \\
0 \\
0 \\
\end{array} } \right] $$ 
where for $i=3,4$ $\phi_{i}=k_{i} \pi /2$ if $k_{i}$ is odd and $\phi_{i}=k_{i} \pi /2 + \pi/4$ if $k_{i}$ is even. \\

Then,
\begin{equation}
\label{firstcomp}
2 sin \frac{ \pi k_{1}}{4} e^{j \phi_{1}} = r e^{j\theta} \times 2 sin \frac{ \pi k_{3}}{4} e^{j \phi_{3}}
\end{equation}
and
\begin{equation}
\label{secondcomp}
2 sin \frac{ \pi k_{2}}{4} e^{j \phi_{2}} = r e^{j\theta} \times 2 sin \frac{ \pi k_{4}}{4} e^{j \phi_{4}}.
\end{equation}

Dividing (\ref{firstcomp}) by (\ref{secondcomp}) and taking modulus of both sides, we get 
\begin{equation}
\label{rel}
\frac{sin \frac{ \pi k_{1}}{4}}{sin \frac{ \pi k_{2}}{4}}  =  \frac{sin \frac{ \pi k_{3}}{4} }{sin \frac{ \pi k_{4}}{4} }.
\end{equation}
As shown in \cite{NMR}, this is possible only if $k_1=k_3$ and $k_2=k_4$. Also, from (\ref{firstcomp}) and (\ref{secondcomp}) we have 
\begin{equation}
\label{div}
\frac{ sin \frac{ \pi k_{1}}{4}}{sin \frac{ \pi k_{2}}{4}} e^{j (\phi_{1} - \phi_{2})} = \frac{ sin \frac{ \pi k_{3}}{4}}{sin \frac{ \pi k_{4}}{4}} e^{j (\phi_{3} - \phi_{4})}.
\end{equation}
From (\ref{rel}) and (\ref{div}), we have 
\begin{equation}
e^{j (\phi_{1} - \phi_{2})} =  e^{j (\phi_{3} - \phi_{4})}.
\end{equation}
Note that here, the LHS is fixed. It therefore suffices to compute the number of values that RHS takes for the fixed value of the LHS. It can be verified, that for a fixed value of $\phi_{1} - \phi_{2}$, there are precisely four pair of values of $\phi_{3} $ and $\phi_{4}$ that result in the same value of $\phi_{3} - \phi_{4}$. We now look at the following possibilities:\\
\textit{Case 1:} $k_1=k_2$. 
Then, $k_3=k_4$, i.e., there are exactly two possibilities for $k_1$ and $k_2$, viz., $k_1=k_2=1$ and $k_1=k_2=2$. With two pairs of values for $k_3$ and $k_4$ and four pairs of values for $\phi_{3} $ and $\phi_{4}$, we have a total of eight set of values that $w$ can take. Hence, in this case, the vector space generated by $v$ can be generated by exactly eight other vectors over $\mathcal{D}_1 \cup \mathcal{D}_2$. 
\textit{Case 2:} $k_1 \neq k_2$. 
Then, $k_3 \neq k_4$, i.e., there is precisely one possibility for $k_1$ and $k_2$, viz., $k_1=k_3$ and $k_1=k_4$. With only one possible set of values for $k_3$ and $k_4$ and four pairs of values for $\phi_{3} $ and $\phi_{4}$, we have a total of four set of values that $w$ can take. Hence, in this case, the vector space generated by $v$ can be generated by exactly four other vectors over $\mathcal{D}_1 \cup \mathcal{D}_2$. 

\end{proof}

In this case we end up with 12 singular fade subspaces given by the null spaces of the spaces given above in Fig. \ref{fig_sfscase2}. Similarly, for each of the other five subcases, there are 12 singular fade subspaces, resulting in a total of 72 singular fade subspaces for the case.

\vspace{.5cm}
\noindent \textbf{\textit{Case 3:}} Three of $x_A-x'_A$, $x_B-x'_B$, $x_C-x'_C$ or $x_D-x'_D$ are non-zero, resulting in 4 subcases.
Consider the subcase, where the singular fade subspace in this case is given by, {\footnotesize \vspace{-0 cm}$\mathcal{S}'''=\left\langle \left[ {\begin{array}{cc}
x_{A}-x'_A \\
x_B-x'_B \\
x_C-x'_C \\
0 \\
\end{array} } \right]\right\rangle ^{\bot}.$} Here each of $x_A- x'_A,~ x_B-x'_B \text{~and~} x_C-x'_C \in \mathcal{D}$ can take eight non-zero values. There are therefore, 512 total possibilities for the vector $ \left[x_A-x'_A, ~x_B-x'_B, ~x_C-x'_C, ~0\right]^t $. 
\begin{lemma} For the case when $x_A- x'_A\neq 0,~  x_B-x'_B \neq 0, ~x_C-x'_C \neq 0 \text{~and~} x_D-x'_D = 0$, for a given vector $v= \left[x_A-x'_A, ~x_B-x'_B, ~x_C-x'_C, ~0\right]^t$ over $\mathcal{D}_1 \cup \mathcal{D}_2$, there are precisely 4 or 8 vectors (including $v$) over $\mathcal{D}_1 \cup \mathcal{D}_2$ that generate the same vector space over $\mathbb{C}$ as $v$.
\end{lemma}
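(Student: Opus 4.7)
The plan is to extend the polar-coordinate argument of Lemma 2 to three active coordinates. First I would write each non-zero component as $x_i - x'_i = 2\sin(\pi n_i/4) e^{j\phi_i}$ with $n_i \in \{1,2\}$, and with $\phi_i$ confined to the coset $\{k\pi/2 : k=0,\dots,3\}$ when $n_i=1$ or $\{k\pi/2 + \pi/4 : k=0,\dots,3\}$ when $n_i=2$, exactly as in \cite{NMR}. A candidate vector $w$ generates the same one-dimensional subspace over $\mathbb{C}$ as $v$ iff $v = r e^{j\theta} w$ for some $re^{j\theta} \in \mathbb{C}\setminus\{0\}$, so I would match magnitudes and arguments in each of the three active coordinates.

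Matching magnitudes yields two independent ratio equations of the form $\sin(\pi n_i/4)/\sin(\pi n_j/4) = \sin(\pi n'_i/4)/\sin(\pi n'_j/4)$. By the same irrationality argument invoked in the proof of Lemma 2, whenever such a ratio is not $1$ it equals $\sqrt{2}$ or $1/\sqrt{2}$, and this forces the pair of magnitude indices on the right to agree with those on the left. I would use this observation to conclude that if the three $n_i$ are not all equal then $n'_i = n_i$ for every $i$, whereas if $n_1=n_2=n_3$ then both ratios equal $1$ and the common index $n'_i$ is free to be either $1$ or $2$ (with $r$ chosen to be $1$ or $\sin(\pi/4)/\sin(\pi/2)$ accordingly). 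This fork is precisely the source of the $4$-versus-$8$ dichotomy in the lemma.

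Fixing the magnitude indices, the argument-matching conditions become $\phi_i + \theta$ lying in the coset associated to $n'_i$ for each $i \in \{A,B,C\}$ simultaneously. When $n'_i = n_i$, every coset involved is preserved under rotations by $\pi/2$, giving exactly four admissible values $\theta \in \{0, \pi/2, \pi, 3\pi/2\}$ and hence four admissible vectors $w$. When $n_1=n_2=n_3$ there is an additional family obtained by flipping every $n'_i$ to the other value; the phase shift required to move between the two cosets is $\pi/4$, so this family contributes four more admissible values $\theta \in \{\pi/4, 3\pi/4, 5\pi/4, 7\pi/4\}$, making the total $8$.

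The main obstacle I anticipate is verifying that the rescaling factor $r$ must be consistent across all three coordinates in the flip case: because $r$ is a single scalar, the magnitudes of every coordinate of $w$ must change by the same ratio, which is the precise reason why $8$ solutions appear only when all three original $n_i$ coincide. Once this consistency is in hand, the remainder of the proof is a direct verification along the same lines as Lemma 2, and no essentially new ideas are required beyond carefully tracking the extra coordinate.
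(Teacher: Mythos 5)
Your proposal is correct and follows essentially the approach the paper intends: the paper omits this proof as ``similar'' to the two-coordinate case (Lemma 2) and gives the analogous four-coordinate argument in full for Lemma 4, and your extension to three active coordinates --- sine-ratio matching forcing the magnitude indices unless all three coincide, then counting the four phase rotations $\theta\in\{0,\pi/2,\pi,3\pi/2\}$ plus the four $\pi/4$-offset rotations in the all-equal case --- is exactly that argument, including the key observation that the single scalar $r$ forces the $\mathcal{D}_1\leftrightarrow\mathcal{D}_2$ flip to occur in all coordinates simultaneously.
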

\begin{proof} The proof of this Lemma is similar to the proof of Lemma 1 and is therefore omitted.
\end{proof}

Since $x_A- x'_A,~ x_B-x'_B \text{~and~} x_C-x'_C \in \mathcal{D}$ are non-zero, we can say that $$x_A- x'_A,~ x_B-x'_B \text{~and~} x_C-x'_C \in \mathcal{D}_{1} \cup \mathcal{D}_2.$$ As a result, we have the following three subcases:
\begin{enumerate}
 \item One of $x_A- x'_A,~ x_B-x'_B \text{~and~} x_C-x'_C \in \mathcal{D}_1$
 \item Two of $x_A- x'_A,~ x_B-x'_B \text{~and~} x_C-x'_C \in \mathcal{D}_1$
 \item All of $x_A- x'_A,~ x_B-x'_B \text{~and~} x_C-x'_C \in \mathcal{D}_1$
\end{enumerate}

We deal with each one of the subcases one-by-one.

\noindent \textit{Subcase 1:} One of $x_A- x'_A,~ x_B-x'_B \text{~and~} x_C-x'_C \in \mathcal{D}_1$. Without loss of generality, we assume that $x_A-x'_A \in \mathcal{D}_1$ and $x_B-x'_B, ~ x_C-x'_C \in \mathcal{D}_2$. There are 64 possibilities for the vector $v'=\left[x_{A}-x'_A, ~ x_B-x'_B, ~ x_C-x'_C, ~0\right]^{t}$. For each of the 64 possibilities for the vector $v'=\left[x_{A}-x'_A, ~ x_B-x'_B, ~ x_C-x'_C, ~0\right]^{t}$, precisely 4 vectors of length 4 over $\mathcal{D}_{1} \cup \mathcal{D}_2$ generate the same vector space over $\mathbb{C}$. As a result, this case leads to 16 singular fade subspaces. The same holds for the case when $x_B-x'_B \in \mathcal{D}_1$ and $x_A-x'_A, ~ x_C-x'_C \in \mathcal{D}_2$, or when $x_C-x'_C \in \mathcal{D}_1$ and $x_A-x'_A, ~ x_B-x'_B \in \mathcal{D}_2$. This subcase therefore results in 48 singular fade subspaces. 

\noindent \textit{Subcase 2:} Two of $x_A- x'_A,~ x_B-x'_B \text{~and~} x_C-x'_C \in \mathcal{D}_1$. This subcase can be dealt with similar to the Subcase 1 and hence also results in 48 singular fade subspaces. 

\noindent \textit{Subcase 3:} All of $x_A- x'_A,~ x_B-x'_B \text{~and~} x_C-x'_C \in \mathcal{D}_1 ~(or~ \mathcal{D}_2)$. There are 128 possibilities for the vector $\left[x_{A}-x'_A, ~ x_B-x'_B, ~ x_C-x'_C, ~0\right]^{t}$ over $\mathcal{D}_1 ~(or~ \mathcal{D}_2)$. For a given such vector, 8 other possibilities of 4 length vectors over $ \mathcal{D}_1 \cup \mathcal{D}_2$ that generate the same vector space over $\mathbb{C}$. The subcase therefore leads to a total of 128/8=16 singular fade subspaces.

The three subcases result in a total of 48+48+16=112 singular fade subspaces. So, there are a total of $4 \times 112= 448$ singular fade subspaces corresponding to Case 3.  

\vspace{.5cm}
\noindent \textbf{\textit{Case 4:}} All of $x_A-x'_A$, $x_B-x'_B$, $x_C-x'_C$ or $x_D-x'_D$ are non-zero. The singular fade subspace in this case is given by, {\footnotesize \vspace{-0 cm}$\mathcal{S}''''=\left\langle \left[ {\begin{array}{cc}
x_{A}-x'_A \\
x_B-x'_B \\
x_C-x'_C \\
x_D-x'_D \\
\end{array} } \right]\right\rangle ^{\bot}.$} Here each of $x_A- x'_A,~ x_B-x'_B, ~x_C-x'_C \text{~and~} x_D-x'_D \in \mathcal{D}$ can take eight non-zero values. There are therefore, 4096 total possibilities for the vector $ \left[x_A-x'_A, ~x_B-x'_B, ~x_C-x'_C, ~x_D-x'_D\right]^t $. 

\begin{lemma} For the case when $x_A \neq x'_A,~  x_B \neq x'_B, ~x_C \neq x'_C  \text{~and~} x_D \neq x'_D $, for a given vector $v= \left[x_A-x'_A, ~x_B-x'_B, ~x_C-x'_C, ~x_D-x'_D\right]^t$ over $\mathcal{D}_1 \cup \mathcal{D}_2$, there are precisely 4 or 8 vectors (including $v$) over $\mathcal{D}_1 \cup \mathcal{D}_2$ that generate the same vector space over $\mathbb{C}$ as $v$.
\end{lemma}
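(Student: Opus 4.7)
The plan is to adapt the argument of Lemma 2 to the four-nonzero-component setting. As before, I would write each entry of $v$ using the difference-set parametrization $v_i = 2\sin(\pi k_i/4)\, e^{j\phi_i}$ with $k_i \in \{1,2\}$ and $\phi_i$ drawn from one of the two cosets of the subgroup $\{0, \pi/2, \pi, 3\pi/2\}$ in $\mathbb{R}/2\pi\mathbb{Z}$, the coset being determined by the parity of $k_i$; similarly $w_i = 2\sin(\pi k'_i/4)\, e^{j\phi'_i}$. Any $w$ over $\mathcal{D}_1 \cup \mathcal{D}_2$ spanning the same line as $v$ satisfies $v = r e^{j\theta} w$ for some $r>0$ and $\theta\in[0,2\pi)$, so I would extract the six pairwise ratio equations $v_i/v_j = w_i/w_j$, obtaining the magnitude relation $\sin(\pi k_i/4)/\sin(\pi k_j/4) = \sin(\pi k'_i/4)/\sin(\pi k'_j/4)$ and the phase relation $\phi_i - \phi_j \equiv \phi'_i - \phi'_j \pmod{2\pi}$ for every pair.

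The magnitude analysis (invoking the same fact from \cite{NMR} used in Lemma 2) yields a dichotomy: whenever $k_i \neq k_j$ the pair $(k'_i, k'_j)$ is pinned to $(k_i, k_j)$, while if $k_i = k_j$ the only freedom is $k'_i = k'_j$ with common value either $1$ or $2$. I would then split into two cases. In Case (i) all four $k_i$ agree (either $v$ lies entirely in $\mathcal{D}_1$ or entirely in $\mathcal{D}_2$): the six pairwise magnitude equations collapse to ``all $k'_i$ agree'', giving exactly two configurations $k' \equiv 1$ and $k' \equiv 2$; each fixes $r$ uniquely (to $1$ or $1/\sqrt{2}$), and the common phase offset $\theta$ may range over the four values that map the coset of $\phi_i$ into the coset of $\phi'_i$ setwise, yielding $2 \times 4 = 8$ scalar multiples. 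In Case (ii) at least one pair $(i,j)$ has $k_i \neq k_j$: that pair forces $k'_i = k_i$ and $k'_j = k_j$, and chaining through the remaining indices via pairs involving $i$ (or $j$) forces $k' = k$ identically. Then $r = 1$ and $\theta$ must lie in the subgroup $\{0, \pi/2, \pi, 3\pi/2\}$ to preserve each phase coset, giving exactly $4$ scalar multiples.

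The main obstacle, inherited from Lemma 2, is justifying that when $k_i \neq k_j$ no simultaneous ``$\mathcal{D}_1 \leftrightarrow \mathcal{D}_2$ swap'' is admissible. Such a swap would require a uniform magnitude rescaling of every coordinate by $\sqrt{2}$ (or $1/\sqrt{2}$), which is incompatible with $v$ carrying both magnitudes $\sqrt{2}$ and $2$ simultaneously; this is precisely where the sine-ratio rigidity from \cite{NMR} enters. Once this observation is in place, the phase enumeration is routine because both admissible phase cosets are translates of the subgroup $\{0, \pi/2, \pi, 3\pi/2\}$, and the counts $4$ and $8$ drop out directly. In effect, the entire proof is a verbatim lift of Lemma 2's structure from two non-zero coordinates to four, with the chaining argument across index pairs being the only new bookkeeping required.
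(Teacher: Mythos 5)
Your proposal is correct and follows essentially the same route as the paper's proof of this lemma: the same $2\sin(\pi k_i/4)e^{j\phi_i}$ parametrization, the same reduction to pairwise sine-ratio and phase-difference equations resolved by the rigidity fact from \cite{NMR}, and the same case split on whether all $k_i$ coincide (yielding $2\times 4=8$ multiples) or not (yielding $1\times 4=4$). The chaining across index pairs that you flag as the only new bookkeeping is exactly what the paper does by dividing the first component equation by each of the others in turn.
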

\begin{proof}
As mentioned in the proof of Lemma 1, from \cite{NMR},
\begin{align}
\nonumber
\mathcal{D}=\Delta \mathcal{S}=\left\{0\right\} & \cup \left\{2sin\left( \pi n / 4\right) e^{j k \pi / 2} | n \text{~odd}\right\}\\
\nonumber
&\cup \left\{2sin\left( \pi n / 4\right) e^{j\left( k \pi / 2 + \pi / 4\right)} | n \text{~even}\right\},
\end{align}
where $ 1 \leq n \leq 2 $ and $ 0 \leq k \leq 3 $. Therefore, we can write,
$$v= \left[ {\begin{array}{cc}
x_{A}-x'_A \\
x_B-x'_B \\
x_C-x'_C \\
x_D-x'_D \\
\end{array} } \right] = \left[ {\begin{array}{cc}
\vspace{0.15cm}
2 sin \frac{ \pi k_{1}}{4} e^{j \phi_{1}} \\
\vspace{0.15cm}
2 sin \frac{ \pi k_{2}}{4} e^{j \phi_{2}} \\
\vspace{0.15cm}
2 sin \frac{ \pi k_{3}}{4} e^{j \phi_{3}} \\
2 sin \frac{ \pi k_{4}}{4} e^{j \phi_{4}} \\
\end{array} } \right] $$
where $\phi_{i}=k_{i} \pi /2$ if $k_{i}$ is odd and $\phi_{i}=k_{i} \pi /2 + \pi/4$ if $k_{i}$ is even. \\

A vector $w$ over $\mathcal{D}_1 \cup \mathcal{D}_2$ shall generate the same vector space over $\mathbb{C}$ iff $w$ is a scalar multiple of $v$, i.e. for some complex number $r e^{j \theta} \in \mathbb{C}$, 
$$v = r e^{j\theta} w \Rightarrow \left[ {\begin{array}{cc}
\vspace{0.15cm}
2 sin \frac{ \pi k_{1}}{4} e^{j \phi_{1}} \\
\vspace{0.15cm}
2 sin \frac{ \pi k_{2}}{4} e^{j \phi_{2}} \\
\vspace{0.15cm}
2 sin \frac{ \pi k_{3}}{4} e^{j \phi_{3}} \\
2 sin \frac{ \pi k_{4}}{4} e^{j \phi_{4}} \\
\end{array} } \right] = r e^{j\theta} \left[ {\begin{array}{cc}
\vspace{0.15cm}
2 sin \frac{ \pi k_{5}}{4} e^{j \phi_{5}} \\
\vspace{0.15cm}
2 sin \frac{ \pi k_{6}}{4} e^{j \phi_{6}} \\
\vspace{0.15cm}
2 sin \frac{ \pi k_{7}}{4} e^{j \phi_{7}} \\
\vspace{0.15cm}
2 sin \frac{ \pi k_{8}}{4} e^{j \phi_{8}} \\
\end{array} } \right] $$ 
where for $i=4,5,6$ $\phi_{i}=k_{i} \pi /2$ if $k_{i}$ is odd and $\phi_{i}=k_{i} \pi /2 + \pi/4$ if $k_{i}$ is even. \\

Then,
\begin{equation}
\label{firstcomp2}
2 sin \frac{ \pi k_{1}}{4} e^{j \phi_{1}} = r e^{j\theta} \times 2 sin \frac{ \pi k_{5}}{4} e^{j \phi_{5}},
\end{equation}
\begin{equation}
\label{secondcomp2}
2 sin \frac{ \pi k_{2}}{4} e^{j \phi_{2}} = r e^{j\theta} \times 2 sin \frac{ \pi k_{6}}{4} e^{j \phi_{6}}
\end{equation}
\begin{equation}
\label{thirdcomp2}
2 sin \frac{ \pi k_{3}}{4} e^{j \phi_{3}} = r e^{j\theta} \times 2 sin \frac{ \pi k_{7}}{4} e^{j \phi_{7}}
\end{equation}
and,
\begin{equation}
\label{forthcomp2}
2 sin \frac{ \pi k_{4}}{4} e^{j \phi_{4}} = r e^{j\theta} \times 2 sin \frac{ \pi k_{8}}{4} e^{j \phi_{8}}.
\end{equation}

Dividing (\ref{firstcomp2}) by (\ref{secondcomp2}) and taking modulus of both sides, we get 
\begin{equation}
\label{rel2}
\frac{sin \frac{ \pi k_{1}}{4}}{sin \frac{ \pi k_{2}}{4}}  =  \frac{sin \frac{ \pi k_{5}}{4} }{sin \frac{ \pi k_{6}}{4} }.
\end{equation}
As shown in \cite{NMR}, this is possible only if $k_1=k_5$ and $k_2=k_6$. Similarly, dividing (\ref{firstcomp2}) by (\ref{thirdcomp2}) and taking modulus of both sides, we get 
\begin{equation}
\label{rel3}
\frac{sin \frac{ \pi k_{1}}{4}}{sin \frac{ \pi k_{3}}{4}}  =  \frac{sin \frac{ \pi k_{5}}{4} }{sin \frac{ \pi k_{7}}{4} }.
\end{equation}
This is possible only if $k_1=k_5$ and $k_3=k_7$. 
Similarly, 
\begin{equation}
\label{rel4}
k_1=k_6 \text{~and~} k_4=k_8.
\end{equation}
Also, from (\ref{firstcomp2}) and (\ref{secondcomp2}) we have 
\begin{equation}
\label{div2}
\frac{ sin \frac{ \pi k_{1}}{4}}{sin \frac{ \pi k_{2}}{4}} e^{j (\phi_{1} - \phi_{2})} = \frac{ sin \frac{ \pi k_{5}}{4}}{sin \frac{ \pi k_{6}}{4}} e^{j (\phi_{5} - \phi_{6})}.
\end{equation}
From (\ref{rel2}) and (\ref{div2}), we have 
\begin{equation}
\label{div3}
e^{j (\phi_{1} - \phi_{2})} =  e^{j (\phi_{5} - \phi_{6})}.
\end{equation}
Similarly, 
\begin{equation}
\label{div4}
e^{j (\phi_{1} - \phi_{3})} =  e^{j (\phi_{5} - \phi_{7})}
\end{equation}
and 
\begin{equation}
\label{div5}
e^{j (\phi_{1} - \phi_{4})} =  e^{j (\phi_{5} - \phi_{8})}.
\end{equation}
In (\ref{div3}), (\ref{div4}) and (\ref{div5}), the LHS is fixed. It therefore suffices to compute the number of values that the RHS takes for fixed LHS in the three equations. It can be verified, that for a fixed value of $\phi_{1},~ \phi_{2},~ \phi_{3} \text{~and~} \phi_4$, there are precisely four set of values of $\phi_{5},~ \phi_{6},~ \phi_{7} \text{~and~} \phi_8$ that result in the same value of $\phi_{1} - \phi_{2}$, $\phi_{1} - \phi_{3}$ and $\phi_{1} - \phi_{4}$. We now look at the following possibilities:\\
\textit{Case 1:} $k_1=k_2=k_3=k_4$. 
Then, $k_5=k_6=k_7=k_8$, i.e., there are exactly two possibilities for $k_5$, $k_6$, $k_7$ and $k_8$, viz., $k_5=k_6=k_7=k_8=1$ or $k_5=k_6=k_7=k_8=2$. With two sets of values for $k_5$, $k_6$, $k_7$ and $k_8$ and four sets of values for $\phi_{5} $, $\phi_{6}$, $\phi_7$ and $\phi_8$, we have a total of eight set of values that $w$ can take. Hence, in this case, the vector space generated by $v$ can be generated by exactly eight other vectors over $\mathcal{D}_1 \cup \mathcal{D}_2$. 

\noindent \textit{Case 2:} At least one of $k_i \neq k_j$, for $1 \leq i \leq 4$, $1 \leq j \leq 4$. 
Then, $k_{i+4} \neq k_{j+4}$, so that, using (\ref{rel2}), (\ref{rel3}) and (\ref{rel4}), there is precisely one possibility for $k_5,~ k_6,~k_7,~k_8$, viz., $k_5=k_1$, $k_6=k_2$, $k_7=k_3$ and $k_8=k_4$. With only one possible set of values for $k_5,~k_6,~k_7$ and $k_8$ and four sets of values for $\phi_{5}, ~ \phi_6 $, $\phi_{7}$ and $\phi_{8}$, we have a total of four set of values that $w$ can take. Hence, in this case, the vector space generated by $v$ can be generated by exactly four other vectors over $\mathcal{D}_1 \cup \mathcal{D}_2$.
\end{proof}

In Case 4, since all of $x_A- x'_A,~ x_B-x'_B,~x_C-x'_C  \text{~and~} x_D-x'_D \in \mathcal{D}$ are non-zero, we can say that $$x_A- x'_A,~ x_B-x'_B, ~x_C-x'_C \text{~and~} x_D-x'_D \in \mathcal{D}_{1} \cup \mathcal{D}_2.$$ As a result, we have the following four subcases:
\begin{enumerate}
 \item Only one of $x_A- x'_A,~ x_B-x'_B, ~x_C-x'_C \text{~and~} x_D-x'_D \in \mathcal{D}_1$
 \item Two of $x_A- x'_A,~ x_B-x'_B, ~x_C-x'_C \text{~and~} x_D-x'_D \in \mathcal{D}_1$
 \item Three of $x_A- x'_A,~ x_B-x'_B, ~x_C-x'_C \text{~and~} x_D-x'_D \in \mathcal{D}_1$
 \item All of $x_A- x'_A,~ x_B-x'_B, ~x_C-x'_C \text{~and~} x_D-x'_D \in \mathcal{D}_1$
\end{enumerate}
We deal with each one of the subcases one-by-one.

\textit{Subcase 1:} Only one of $x_A- x'_A,~ x_B-x'_B,~ x_C-x'_C \text{~and~} x_D-x'_D \in \mathcal{D}_1$. Without loss of generality, we assume that $x_A-x'_A \in \mathcal{D}_1$ and $x_B-x'_B, ~ x_C-x'_C, ~ x_D-x'_D \in \mathcal{D}_2$. There are 256 possibilities for the vector $v'=\left[x_{A}-x'_A, ~ x_B-x'_B, ~ x_C-x'_C, ~ x_D-x'_D\right]^{t}$. But precisely 4 vectors of length 4 over $\mathcal{D}_{1} \cup \mathcal{D}_2$, i.e. the $\left\{\pm1, \pm j \right\}$ scalar multiples of the vector generate the same vector space over $\mathbb{C}$. As a result, the case $x_A-x'_A \in \mathcal{D}_1$ and $x_B-x'_B, ~ x_C-x'_C \in \mathcal{D}_2$ leads to 64 singular fade subspaces. The same holds for the case when $x_B-x'_B \in \mathcal{D}_1$ and $x_A-x'_A, ~ x_C-x'_C, ~x_D-x'_D \in \mathcal{D}_2$, when $x_C-x'_C \in \mathcal{D}_1$, and $x_A-x'_A, ~ x_B-x'_B,~x_D-x'_D \in \mathcal{D}_2$, or when $x_D-x'_D \in \mathcal{D}_1$ $x_A-x'_A, ~ x_B-x'_B,~x_D-x'_D \in \mathcal{D}_2$. This subcase therefore results in 256 singular fade subspaces.

\textit{Subcase 2:} Two of $x_A- x'_A,~ x_B-x'_B,~ x_C-x'_C \text{~and~} x_D-x'_D \in \mathcal{D}_1$. 
This subcase can be dealt with similar to the Subcase 1 and results in 384 singular fade subspaces.

\textit{Subcase 3:} Three of $x_A- x'_A,~ x_B-x'_B,~ x_C-x'_C \text{~and~} x_D-x'_D \in \mathcal{D}_1$. 
This subcase can also be dealt with similar to the Subcase 1 and results in 256 singular fade subspaces.

\textit{Subcase 4:} All of, or none of $x_A- x'_A,~ x_B-x'_B,~ x_C-x'_C \text{~and~} x_D-x'_D \in \mathcal{D}_1$. There are 256 possibilities for the vector $\left[x_{A}-x'_A, ~ x_B-x'_B, ~ x_C-x'_C\right]^{t}$ over $\mathcal{D}_1$. For a given such vector, possibilities of other 4 length vectors over $ \mathcal{D}_1 \cup \mathcal{D}_2$ that generate the same vector space over $\mathbb{C}$ are the $\left\{\pm1, \pm j, \pm1\pm j \right\}$ scalar multiples of the vector. The case $x_A-x'_A, ~x_B-x'_B, ~ x_C-x'_C \in \mathcal{D}_1$ leads to a total of 64 singular fade subspaces. These 64 singular fade subspaces have been listed in the Appendix.

The four cases result in a total of 4+72+448+960=1484 singular fade subspaces. As done for 4-PSK case, this computation can be similarly done for the case when M-PSK ($M > 4$) constellation is used at the nodes. We now discuss how these singular fade subspaces can be removed using 4-fold Latin Hyper-Cube of side 4.

\section{Removing singular fade subspaces}

We now cluster the possibilities of $\left(x_{A},x_{B},x_{C},x_D\right)$ into a clustering that can be represented by a 4-fold Latin Hyper-Cubes of side 4. This clustering is represented by a constellation given by $\mathcal{S}'$, which is utilized by the relay node R in the BC phase. The objective is to obtain a clustering that removes the singular fade subspaces, and then attempts to minimize the size of this constellation used by R. This clustering to be used at R, first constrains the possibilities of $\left(x_{A},x_{B},x_{C},x_D\right)$ received at the MA phase, with the objective of removing the singular fade subspaces, and fills the entries of a $4 \times 4\times 4 \times 4$ array representing the map to be used at the relay using these constraints, and then completes the partially filled array obtained to form a 4-fold Latin hyper-cube of side 4. The mapping to be used at R can be obtained from the complete Latin hyper-cube keeping in mind the equivalence between the relay map with the 4-fold Latin Hyper-Cube of side 4 as shown in Section III.

During the MA phase for the four-way relaying scenario, nodes A, B, C and D transmit to the relay R. As shown in the previous section, there is a total of 1484 singular fade subspaces. Let the fade state $(h_A, h_B, h_C, h_D)$ denote a point in one of the 1484 singular fade subspaces. The constraints on the $4 \times 4 \times 4 \times 4$ array representing the map at the relay node R during BC phase for a singular fade state, can be obtained using the vectors of differences, viz., $ \left[x_A-x'_A ~ x_B-x'_B,~ x_C-x'_C,~ x_D-x'_D\right]$ contributing to this particular singular fade state. So, if $(h_A, h_B, h_C, h_D) \in \left[x_A-x'_A ~ x_B-x'_B,~ x_C-x'_C,~ x_D-x'_D\right]$, then, for $\left(x_{A}, x_{B}, x_C, x_D\right), \left(x'_{A},x'_{B},x'_C, x'_D\right) \in \mathcal{S}^4$, $ h_A x_{A} + h_B x_{B} + h_C x_C +h_D x_D= h_A x'_{A} + h_B x'_{B} + h_C x'_C +h_D x'_D$. For a clustering to remove the singular fade state $(h_A, h_B, h_C, h_D)$, i.e., for the minimum distance of the clustering to be greater than 0 (Section II), the pair $\left(x_{A}, x_{B}, x_C, x_D\right), \left(x'_{A},x'_{B},x'_C,x'_D\right)$ must be kept in the same cluster. Alternatively, we can say that the entry corresponding to $\left(x_{A}, x_{B}, x_C, x_D\right)$ in the $4 \times 4 \times 4 \times 4$ array must be the same as the entry corresponding to $\left(x'_{A},x'_{B},x'_C, x'_D\right)$. Similarly, every other such pair in $\mathcal{S}^4$ contributing to this same singular fade subspace must be kept in the same cluster. Apart from all such pairs in $\mathcal{S}^4$ being kept in the same cluster of the clustering, in order to remove this particular fade state, there are no other constraints. This constrained $4 \times 4 \times 4 \times 4$ array can then be completed by simply filling the top-most and the left-most empty cell in the earliest file of the earliest cube, with $\mathcal{L}_i, i \geq 1$ in the increasing order of $i$ such that the completed array is a 4-fold Latin Hyper-Cube of side 4. This algorithm is given in detail in Algorithm 1. The above clustering scheme, however, cannot be utilized to remove the singular fade subspaces of \textit{Case 1}, \textit{Case 2} and \textit{Case 3} of the previous section, as shown in the following lemma.

\begin{lemma}
The clustering map used at the relay node R cannot remove a singular fade state corresponding to the \textit{Case 1}, \textit{Case 2} and \textit{Case 3} of the previous section and simultaneously satisfy the mutually exclusive law.
\end{lemma}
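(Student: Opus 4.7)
The plan is to exhibit, in each of Cases 1, 2, 3, a direct contradiction between the singularity removal constraint (which demands that two distinct tuples $(x_A,x_B,x_C,x_D)$ and $(x'_A,x'_B,x'_C,x'_D)$ lie in the same cluster) and one of the four exclusive laws (\ref{mel1})--(\ref{mel4}) (which forbids precisely that equality). The unifying observation is that in each of these three cases at least one coordinate of the two tuples agrees, and the exclusive law indexed by the corresponding user then applies directly to the pair and forces the two maps to differ.

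For Case 1, exactly one coordinate differs. Using the representative first subcase $x_A \neq x'_A$ with $x_B=x'_B$, $x_C=x'_C$, $x_D=x'_D$, singularity removal requires $\mathcal{M}^{H_A,H_B,H_C,H_D}(x_A,x_B,x_C,x_D) = \mathcal{M}^{H_A,H_B,H_C,H_D}(x'_A,x_B,x_C,x_D)$. But (\ref{mel2}) (the exclusive law fixing the $B$-coordinate) asserts the opposite, since the $B$-coordinates match while $(x_A,x_C,x_D) \neq (x'_A,x_C,x_D)$. The remaining three subcases are symmetric: whichever coordinate differs, invoke the exclusive law of any of the other three users.

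For Case 2, exactly two coordinates differ, say $x_A\neq x'_A$ and $x_B\neq x'_B$, with $x_C=x'_C$ and $x_D=x'_D$. Singularity removal forces $\mathcal{M}(x_A,x_B,x_C,x_D) = \mathcal{M}(x'_A,x'_B,x_C,x_D)$, whereas (\ref{mel3}) (fixing the $C$-coordinate) forbids it because $(x_A,x_B,x_D)\neq(x'_A,x'_B,x_D)$. Either (\ref{mel3}) or (\ref{mel4}) works in this subcase; the other five subcases of Case 2 follow by relabelling which user's coordinate is invoked. For Case 3, exactly three coordinates differ and one agrees. In the representative subcase $x_A=x'_A$ with the other three differing, singularity removal forces $\mathcal{M}(x_A,x_B,x_C,x_D) = \mathcal{M}(x_A,x'_B,x'_C,x'_D)$, but (\ref{mel1}) (fixing the $A$-coordinate) forbids it since $(x_B,x_C,x_D)\neq(x'_B,x'_C,x'_D)$. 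The remaining three subcases of Case 3 are handled identically by invoking the exclusive law of the unique user whose coordinate agrees.

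No substantive obstacle is anticipated: the proof is a direct bookkeeping argument showing that in each of Cases 1--3 there is always at least one user whose coordinate is common to the two tuples being identified, and that user's exclusive law is then violated. The argument also makes transparent why only Case 4 (all four coordinates differ) escapes this obstruction, so that only Case 4 singular fade subspaces are candidates for removal by an exclusive-law-satisfying clustering.
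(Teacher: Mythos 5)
Your proof is correct, and it is the argument the lemma is clearly intended to rest on: in each of Cases 1--3 the two tuples forced into the same cluster by the singularity removal constraint share at least one coordinate, and the exclusive law (\ref{mel1})--(\ref{mel4}) indexed by that shared coordinate forbids exactly that identification, while Case 4 escapes because no coordinate is shared. The paper itself states the lemma without supplying a proof (deferring, presumably, to the extended version), so your write-up supplies precisely the missing bookkeeping and nothing in it needs repair.
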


We refer to the singular fade subspaces given in \textit{Case 1}, \textit{Case 2} and \textit{Case 3} of the previous section, whose harmful effects cannot be removed by a proper choice of the clustering, as the \textit{non-removable singular fade subspaces}. We now illustrate the removal of a \textit{Case 4} singular fade state with the help of the following example.
\begin{example}
Consider the case for which the singular fade subspace is given by


\begin{tiny}
\noindent\vspace{-.20 cm}
\begin{displaymath}
\mathcal{S}''= \left\langle \left[ {\begin{array}{c}
-1-j\\
1+j\\
1+j\\
1-j\\
\end{array} } \right]\right\rangle ^{\bot}\hspace{-0.3cm}
=\hspace{-0.1cm}\left\langle \left[ {\begin{array}{c}
1-j \\
-1+j \\
-1+j \\
1+j \\
\end{array} } \right]\right\rangle ^{\bot}\hspace{-0.3cm}
=\hspace{-0.1cm}\left\langle \left[ {\begin{array}{c}
1+j \\
-1-j \\
-1-j \\
-1+j \\
\end{array} } \right]\right\rangle ^{\bot}\hspace{-0.3cm}
=\hspace{-0.1cm}\left\langle \left[ {\begin{array}{c}
-1+j \\
1-j \\
1-j \\
-1-j \\
\end{array} } \right]\right\rangle ^{\bot}
\end{displaymath}
\vspace{-0.65cm}

\begin{displaymath}
~~~~~~=\left\langle \left[ {\begin{array}{c}
-2j \\
2j \\
2j \\
2 \\
\end{array} } \right]\right\rangle ^{\bot}
=\left\langle \left[ {\begin{array}{c}
2 \\
-2 \\
-2 \\
2j \\
\end{array} } \right]\right\rangle ^{\bot}
=\left\langle \left[ {\begin{array}{c}
2j \\
-2j \\
-2j \\
-2 \\
\end{array} } \right]\right\rangle ^{\bot}
=\left\langle \left[ {\begin{array}{c}
-2 \\
2 \\
2 \\
-2j \\
\end{array} } \right]\right\rangle ^{\bot}.
\end{displaymath}
\vspace{-.4 cm}
\end{tiny}

The first vector is $\left[ -1-j, ~ 1+j, ~ 1+j, ~ 1-j \right]$. Now, $-1-j$ can be obtained either as a difference of $x_A=-1$ and $x'_A=j$ or as a difference of $x_A=-j$ and $x'_A=1$. Similarly, $1+j$ can be obtained as a difference of $1$ and $-j$ or as a difference of $j$ and $-1$; $1-j$ can be obtained as a difference of $x_D=1$ and $x'_D=j$ or as a difference of $x_D=-j$ and $x'_D=1$. Thus, the entries corresponding to {\footnotesize $\left\{(-1,1,1,1),(j,-j,-j,j)\right\}$, $\left\{(-1,1,1,-j),(j,-j,-j,-1)\right\}$, $\left\{(-1,1,j,1),(j,-j,-1,j)\right\}$, $\left\{(-1,1,j,-j),(j,-j,-1,-1)\right\}$, $\left\{(-1,j,1,1),(j,-1,-j,j)\right\}$, $\left\{(-1,j,1,-j),(j,-1,,-j,-1)\right\}$, $\left\{(-1,j,j,1),(j,-1,-1,j)\right\}$, $\left\{(-1,j,j,-j),(j,-1,-1,-1)\right\}$, $\left\{(-j,1,1,1),(1,-j,-j,j)\right\}$, $\left\{(-j,1,1,-j),(1,-j,-j,-1)\right\}$, $\left\{(-j,1,j,1),(1,-j,-1,j)\right\}$, $\left\{(-j,1,j,-j),(1,-j,-1,-1)\right\}$, $\left\{(-j,j,1,1),(1,-1,-j,j)\right\}$, $\left\{(-j,j,1,-j),(1,-1,-j,-1)\right\}$, $\left\{(-j,j,j,1),(1,-1,-1,j)\right\}$, $\left\{(-j,j,j,-j),(1,-1,-1,-1)\right\}$ } must be the same in the $4 \times 4 \times 4 \times 4$ array representing the clustering. Similarly the other constraints can be obtained. Replacing $1,j,-1,-j$ with $0,1,2,3$, these constraints can be represented in $4 \times 4 \times 4 \times 4$ array representing the clustering. The constrained Hyper-Cube of side 4 is completed using Algorithm 1 to form a 4-fold Latin Hyper-Cube of side 4. The completed Latin Hyper Cube is as shown in Table \ref{fig:example}. 

Similarly, for each of the 960 possibilities of singular fade subspaces of \textit{Case 4} of the previous section, a clustering of size between 64 to 90 can be achieved by first constraining the array representing the relay map in order to remove the singular fade state and then completing the constrained array using the provided algorithm.

\begin{algorithm}
\SetLine
\linesnumbered
\KwIn{The constrained $4 \times 4\times 4 \times 4$ array}
\KwOut{A 4-fold Latin Hyper-Cube of side 4 representing the clustering map at the relay}

Start with the constrained $4 \times 4\times 4 \times 4$ array $\mathcal{X}$

Initialize all empty cells of $\mathcal{X}$ to 0

%
The $\left(i,j,k,l\right)^{th}$ element of $\mathcal{X}$ is the $i^{th}$ cube, the $j^{th}$ file, the $k^{th}$ row and the $l^th$ column cell.

\For{$1\leq i\leq 4 $}{

\For{$1\leq j\leq 4 $}{

\For{$1\leq k\leq 4 $}{

\For{$1\leq l\leq 4 $}{

\If{cell $\left(i,j,k,l\right)$ of $\mathcal{X}$ is NULL}{

Initialize c=1

\eIf{$\mathcal{L}_{c}$ does not occur in the $i^{th}$ cube of $\mathcal{X}$, the $j^{th}$ file of $\mathcal{X}$, the $k^{th}$ row of $\mathcal{X}$ and the $l^{th}$ column of $\mathcal{X}$}{
replace 0 at cell $\left(i,j,k,l\right)$ of $\mathcal{X}$ with $\mathcal{L}_{c}$\;
}{
c=c+1\; 
}
}
}
}
}
}
\caption{Obtaining the 4-fold Latin Hyper-Cube of side 4 from the constrained $4 \times 4 \times 4 \times 4 $ array}

\end{algorithm}


\begin{table*}[tp]
\tiny
\centering
\renewcommand{\arraystretch}{1.3}
\begin{tabular}{!{\vrule width 1.2pt}c!{\vrule width 0.9pt}p{0.3cm}|p{0.3cm}|p{0.3cm}|p{0.3cm}!{\vrule width 1.2pt}c!{\vrule width 0.9pt}p{0.3cm}|p{0.3cm}|p{0.3cm}|p{0.3cm}!{\vrule width 1.2pt}c!{\vrule width 0.9pt}p{0.3cm}|p{0.3cm}|p{0.3cm}|p{0.3cm}!{\vrule width 1.2pt}c!{\vrule width 0.9pt}p{0.3cm}|p{0.3cm}|p{0.3cm}|p{0.3cm}!{\vrule width 1.2pt}}\noalign{\hrule height 1.2pt}
 $x_A=0$   &\multirow{2}{*}{0} & \multirow{2}{*}{1} & \multirow{2}{*}{2} & \multirow{2}{*}{3}                  & $x_A=1$ & \multirow{2}{*}{0} & \multirow{2}{*}{1} & \multirow{2}{*}{2} & \multirow{2}{*}{3}                                    & $x_A=2$ & \multirow{2}{*}{0} & \multirow{2}{*}{1} & \multirow{2}{*}{2} & \multirow{2}{*}{3}            &  $x_A=3$ & \multirow{2}{*}{0} & \multirow{2}{*}{1} & \multirow{2}{*}{2} & \multirow{2}{*}{3} \\
 $x_B=0$   &   &   &   &                                                                                       & $x_B=0$ &   &   &   &                                                                                                          & $x_B=0$ &   &   &   &                                                                                  &  $x_B=0$ &   &   &   &   \\\noalign{\hrule height 0.9pt}
   0       &    $\mathcal{L}_{16}$  &    $\mathcal{L}_{17}$    & $ \mathcal{L}_{18}$   &  $\mathcal{L}_{19}$   & 0   &    $\mathcal{L}_{20}$  &    $\mathcal{L}_{21}$    & $ \pmb{\mathcal{L}}_{\pmb{10}}$   &  $\mathcal{L}_{9}$               & 0   &    $\pmb{\mathcal{L}}_{\pmb{1}}$  &    $\mathcal{L}_{22}$    & $ \pmb{\mathcal{L}}_{\pmb{15}}$   &  $\pmb{\mathcal{L}}_{\pmb{2}}$  &  0       &    $\pmb{\mathcal{L}}_{\pmb{8}}$  &    $\mathcal{L}_{23}$    & $ \mathcal{L}_{24}$   &  $\pmb{\mathcal{L}}_{\pmb{7}}$ \\\hline
   1       &    $\mathcal{L}_{30}$  &    $\mathcal{L}_{29}$    & $ \mathcal{L}_{32}$   &  $\mathcal{L}_{31}$   & 1   &    $\mathcal{L}_{26}$  &    $\mathcal{L}_{25}$    & $ \mathcal{L}_{28}$               &  $\mathcal{L}_{27}$              & 1   &    $\pmb{\mathcal{L}}_{\pmb{3}}$  &    $\mathcal{L}_{36}$    & $ \mathcal{L}_{35}$               &  $\pmb{\mathcal{L}}_{\pmb{4}}$  &  1       &    $\pmb{\mathcal{L}}_{\pmb{6}}$  &    $\mathcal{L}_{34}$    & $ \mathcal{L}_{33}$   &  $\pmb{\mathcal{L}}_{\pmb{5}}$ \\\hline
   2       &    $\mathcal{L}_{47}$  &    $\mathcal{L}_{48}$    & $ \mathcal{L}_{45}$   &  $\mathcal{L}_{46}$   & 2   &    $\mathcal{L}_{51}$  &    $\mathcal{L}_{52}$    & $ \mathcal{L}_{49}$               &  $\mathcal{L}_{50}$              & 2   &    $\mathcal{L}_{39}$             &    $\mathcal{L}_{40}$    & $ \mathcal{L}_{37}$               &  $\mathcal{L}_{38}$             &  2       &    $\mathcal{L}_{43}$             &    $\mathcal{L}_{44}$    & $ \mathcal{L}_{41}$   &  $\mathcal{L}_{42}$ \\\hline
   3       &    $\mathcal{L}_{64}$  &    $\mathcal{L}_{63}$    & $ \mathcal{L}_{62}$   &  $\mathcal{L}_{61}$   & 3   &    $\mathcal{L}_{60}$  &    $\mathcal{L}_{59}$    & $ \pmb{\mathcal{L}}_{\pmb{12}}$   &  $\pmb{\mathcal{L}}_{\pmb{11}}$  & 3   &    $\mathcal{L}_{58}$             &    $\mathcal{L}_{57}$    & $ \pmb{\mathcal{L}}_{\pmb{13}}$   &  $\pmb{\mathcal{L}}_{\pmb{14}}$ &  3       &    $\mathcal{L}_{56}$             &    $\mathcal{L}_{55}$    & $ \mathcal{L}_{54}$   &  $\mathcal{L}_{53}$ \\\noalign{\hrule height 1.2pt}
 $x_A=0$   &\multirow{2}{*}{0} & \multirow{2}{*}{1} & \multirow{2}{*}{2} & \multirow{2}{*}{3}                                   & $x_A=1$ & \multirow{2}{*}{0} & \multirow{2}{*}{1} & \multirow{2}{*}{2} & \multirow{2}{*}{3}                & $x_A=2$ & \multirow{2}{*}{0} & \multirow{2}{*}{1} & \multirow{2}{*}{2} & \multirow{2}{*}{3}                   &  $x_A=3$ & \multirow{2}{*}{0} & \multirow{2}{*}{1} & \multirow{2}{*}{2} & \multirow{2}{*}{3} \\
 $x_B=1$   &   &   &   &                                                                                                        & $x_B=1$ &   &   &   &                                                                                      & $x_B=1$ &   &   &   &                                                                                         &  $x_B=1$ &   &   &   &   \\\noalign{\hrule height 0.9pt}
   0   &    $\mathcal{L}_{25}$             &    $\mathcal{L}_{26}$                & $ \mathcal{L}_{27}$   &  $\mathcal{L}_{28}$ & 0   &    $\mathcal{L}_{29}$  &    $\mathcal{L}_{30}$    & $ \mathcal{L}_{31}$   &  $\mathcal{L}_{32}$ &0   &    $\pmb{\mathcal{L}}_{\pmb{5}}$ &    $\mathcal{L}_{33}$    & $ \mathcal{L}_{34}$   &  $\pmb{\mathcal{L}}_{\pmb{6}}$ & 0   &    $\pmb{\mathcal{L}}_{\pmb{4}}$  &    $\mathcal{L}_{35}$                & $ \mathcal{L}_{36}$   &  $\pmb{\mathcal{L}}_{\pmb{3}}$ \\\hline
   1   &    $\pmb{\mathcal{L}}_{\pmb{9}}$  &    $\pmb{\mathcal{L}}_{\pmb{10}}$    & $ \mathcal{L}_{20}$   &  $\mathcal{L}_{21}$ & 1   &    $\mathcal{L}_{17}$  &    $\mathcal{L}_{16}$    & $ \mathcal{L}_{19}$   &  $\mathcal{L}_{18}$ &1   &    $\pmb{\mathcal{L}}_{\pmb{7}}$ &    $\mathcal{L}_{24}$    & $ \mathcal{L}_{23}$   &  $\pmb{\mathcal{L}}_{\pmb{8}}$ & 1   &    $\pmb{\mathcal{L}}_{\pmb{2}}$  &    $\pmb{\mathcal{L}}_{\pmb{15}}$    & $ \mathcal{L}_{22}$   &  $\pmb{\mathcal{L}}_{\pmb{1}}$ \\\hline
   2   &    $\pmb{\mathcal{L}}_{\pmb{11}}$  &    $\pmb{\mathcal{L}}_{\pmb{12}}$   & $ \mathcal{L}_{60}$   &  $\mathcal{L}_{59}$ & 2   &    $\mathcal{L}_{63}$  &    $\mathcal{L}_{64}$    & $ \mathcal{L}_{61}$   &  $\mathcal{L}_{62}$ &2   &    $\mathcal{L}_{55}$            &    $\mathcal{L}_{56}$    & $ \mathcal{L}_{53}$   &  $\mathcal{L}_{54}$           & 2   &    $\pmb{\mathcal{L}}_{\pmb{14}}$  &    $\pmb{\mathcal{L}}_{\pmb{13}}$    & $ \mathcal{L}_{58}$   &  $\mathcal{L}_{57}$ \\\hline
   3   &    $\mathcal{L}_{52}$              &    $\mathcal{L}_{51}$               & $ \mathcal{L}_{50}$   &  $\mathcal{L}_{49}$ & 3   &    $\mathcal{L}_{48}$  &    $\mathcal{L}_{47}$    & $ \mathcal{L}_{46}$   &  $\mathcal{L}_{45}$ &3   &    $\mathcal{L}_{44}$            &    $\mathcal{L}_{43}$    & $ \mathcal{L}_{42}$   &  $\mathcal{L}_{41}$           & 3   &    $\mathcal{L}_{40}$              &    $\mathcal{L}_{39}$                & $ \mathcal{L}_{38}$   &  $\mathcal{L}_{37}$ \\\noalign{\hrule height 1.2pt}
 $x_A=0$   &\multirow{2}{*}{0} & \multirow{2}{*}{1} & \multirow{2}{*}{2} & \multirow{2}{*}{3}                                            & $x_A=1$ & \multirow{2}{*}{0} & \multirow{2}{*}{1} & \multirow{2}{*}{2} & \multirow{2}{*}{3}            & $x_A=2$ & \multirow{2}{*}{0} & \multirow{2}{*}{1} & \multirow{2}{*}{2} & \multirow{2}{*}{3}                                    &  $x_A=3$ & \multirow{2}{*}{0} & \multirow{2}{*}{1} & \multirow{2}{*}{2} & \multirow{2}{*}{3} \\
 $x_B=2$   &   &   &   &                                                                                                                 & $x_B=2$ &   &   &   &                                                                                  & $x_B=2$ &   &   &   &                                                                                                          &  $x_B=2$ &   &   &   &   \\\noalign{\hrule height 0.9pt}
 0   &    $\mathcal{L}_{37}$              &    $\mathcal{L}_{38}$               & $ \mathcal{L}_{39}$              &  $\mathcal{L}_{40}$ &0   &    $\mathcal{L}_{41}$  &    $\mathcal{L}_{42}$               & $ \mathcal{L}_{43}$              &  $\mathcal{L}_{44}$  & 0   &    $\mathcal{L}_{45}$  &    $\mathcal{L}_{46}$    & $ \mathcal{L}_{47}$   &  $\mathcal{L}_{48}$     & 0   &    $\mathcal{L}_{49}$              &    $\mathcal{L}_{50}$               & $ \mathcal{L}_{51}$   &  $\mathcal{L}_{52}$ \\\hline
 1   &    $\pmb{\mathcal{L}}_{\pmb{13}}$  &    $\pmb{\mathcal{L}}_{\pmb{14}}$   & $ \mathcal{L}_{57}$              &  $\mathcal{L}_{58}$ &1   &    $\mathcal{L}_{54}$  &    $\mathcal{L}_{53}$               & $ \mathcal{L}_{56}$              &  $\mathcal{L}_{55}$  & 1   &    $\mathcal{L}_{62}$  &    $\mathcal{L}_{61}$    & $ \mathcal{L}_{64}$   &  $\mathcal{L}_{63}$     & 1   &    $\pmb{\mathcal{L}}_{\pmb{12}}$  &    $\pmb{\mathcal{L}}_{\pmb{11}}$    & $ \mathcal{L}_{59}$   &  $\mathcal{L}_{60}$ \\\hline
 2   &    $\pmb{\mathcal{L}}_{\pmb{15}}$  &    $\pmb{\mathcal{L}}_{\pmb{2}}$    & $ \pmb{\mathcal{L}}_{\pmb{1}}$   &  $\mathcal{L}_{22}$ &2   &    $\mathcal{L}_{23}$  &    $\pmb{\mathcal{L}}_{\pmb{7}}$    & $ \pmb{\mathcal{L}}_{\pmb{8}}$   &  $\mathcal{L}_{24}$  & 2   &    $\mathcal{L}_{18}$  &    $\mathcal{L}_{19}$    & $ \mathcal{L}_{16}$   &  $\mathcal{L}_{17}$     & 2   &    $\pmb{\mathcal{L}}_{\pmb{10}}$  &    $\pmb{\mathcal{L}}_{\pmb{9}}$    & $ \mathcal{L}_{21}$   &  $\mathcal{L}_{20}$ \\\hline
 3   &    $\mathcal{L}_{36}$              &    $\pmb{\mathcal{L}}_{\pmb{4}}$    & $ \pmb{\mathcal{L}}_{\pmb{3}}$   &  $\mathcal{L}_{35}$ &3   &    $\mathcal{L}_{34}$  &    $\pmb{\mathcal{L}}_{\pmb{5}}$    & $ \pmb{\mathcal{L}}_{\pmb{6}}$   &  $\mathcal{L}_{33}$  & 3   &    $\mathcal{L}_{32}$  &    $\mathcal{L}_{31}$    & $ \mathcal{L}_{30}$   &  $\mathcal{L}_{29}$     & 3   &    $\mathcal{L}_{28}$              &    $\mathcal{L}_{27}$                & $ \mathcal{L}_{26}$   &  $\mathcal{L}_{25}$ \\\noalign{\hrule height 1.2pt}

 $x_A=0$   &\multirow{2}{*}{0} & \multirow{2}{*}{1} & \multirow{2}{*}{2} & \multirow{2}{*}{3}          & $x_A=1$ & \multirow{2}{*}{0} & \multirow{2}{*}{1} & \multirow{2}{*}{2} & \multirow{2}{*}{3}                                                                 & $x_A=2$ & \multirow{2}{*}{0} & \multirow{2}{*}{1} & \multirow{2}{*}{2} & \multirow{2}{*}{3}                                               &  $x_A=3$ & \multirow{2}{*}{0} & \multirow{2}{*}{1} & \multirow{2}{*}{2} & \multirow{2}{*}{3} \\
 $x_B=3$   &   &   &   &                                                                               & $x_B=3$ &   &   &   &                                                                                                                                       & $x_B=3$ &   &   &   &                                                                                                                     &  $x_B=3$ &   &   &   &   \\\noalign{\hrule height 0.9pt}
 0   &    $\mathcal{L}_{53}$  &    $\mathcal{L}_{54}$               & $ \mathcal{L}_{55}$              &  $\mathcal{L}_{56}$ & 0   &    $\mathcal{L}_{57}$  &    $\mathcal{L}_{58}$               & $ \pmb{\mathcal{L}}_{\pmb{14}}$    &  $\pmb{\mathcal{L}}_{\pmb{13}}$     & 0   &    $\mathcal{L}_{59}$  &    $\mathcal{L}_{60}$    & $\pmb{\mathcal{L}}_{\pmb{11}}$   &  $\pmb{\mathcal{L}}_{\pmb{12}}$ &0   &    $\mathcal{L}_{61}$  &    $\mathcal{L}_{62}$    & $ \mathcal{L}_{63}$   &  $\mathcal{L}_{64}$ \\\hline 
 1   &    $\mathcal{L}_{42}$  &    $\mathcal{L}_{41}$               & $ \mathcal{L}_{44}$              &  $\mathcal{L}_{43}$ & 1   &    $\mathcal{L}_{38}$  &    $\mathcal{L}_{37}$               & $ \mathcal{L}_{40}$                &  $\mathcal{L}_{39}$                 & 1   &    $\mathcal{L}_{50}$  &    $\mathcal{L}_{49}$    & $ \mathcal{L}_{52}$              &  $\mathcal{L}_{51}$                  &1   &    $\mathcal{L}_{46}$  &    $\mathcal{L}_{45}$    & $ \mathcal{L}_{48}$   &  $\mathcal{L}_{47}$ \\\hline 
 2   &    $\mathcal{L}_{33}$  &    $\pmb{\mathcal{L}}_{\pmb{6}}$    & $ \pmb{\mathcal{L}}_{\pmb{5}}$   &  $\mathcal{L}_{24}$ & 2   &    $\mathcal{L}_{35}$  &    $\pmb{\mathcal{L}}_{\pmb{3}}$    & $ \pmb{\mathcal{L}}_{\pmb{4}}$     &  $\mathcal{L}_{36}$                 & 2    &    $\mathcal{L}_{27}$  &    $\mathcal{L}_{28}$    & $ \mathcal{L}_{25}$             &  $\mathcal{L}_{26}$                  &2   &    $\mathcal{L}_{31}$  &    $\mathcal{L}_{32}$    & $ \mathcal{L}_{29}$   &  $\mathcal{L}_{30}$ \\\hline 
 3   &    $\mathcal{L}_{24}$  &    $\pmb{\mathcal{L}}_{\pmb{8}}$    & $ \pmb{\mathcal{L}}_{\pmb{7}}$   &  $\mathcal{L}_{33}$ & 3   &    $\mathcal{L}_{22}$  &    $\pmb{\mathcal{L}}_{\pmb{1}}$    & $ \pmb{\mathcal{L}}_{\pmb{2}}$     &  $\pmb{\mathcal{L}}_{\pmb{15}}$     & 3   &    $\mathcal{L}_{21}$  &    $\mathcal{L}_{20}$    & $ \pmb{\mathcal{L}}_{\pmb{9}}$   &  $\pmb{\mathcal{L}}_{\pmb{10}}$      &3   &    $\mathcal{L}_{19}$  &    $\mathcal{L}_{18}$    & $ \mathcal{L}_{17}$   &  $\mathcal{L}_{16}$ \\\noalign{\hrule height 1.2pt}
\end{tabular}
\vspace{0.05 cm}
\caption{Example 1: Latin-Hyper Cube representing the Relay Map where entries $x_A$'s and $x_B$'s are as mentioned, $x_C$'s entries are along the rows and $x_D$'s entries are along the columns of each $4\times 4$ matrix for fixed values of $x_A$ and $x_B$.}
\label{fig:example}
\end{table*}
\end{example}

\begin{table*}[tp]
\tiny
\centering
\renewcommand{\arraystretch}{1.3}
\begin{tabular}{!{\vrule width 1.2pt}c!{\vrule width 0.9pt}p{0.3cm}|p{0.3cm}|p{0.3cm}|p{0.3cm}!{\vrule width 1.2pt}c!{\vrule width 0.9pt}p{0.3cm}|p{0.3cm}|p{0.3cm}|p{0.3cm}!{\vrule width 1.2pt}c!{\vrule width 0.9pt}p{0.3cm}|p{0.3cm}|p{0.3cm}|p{0.3cm}!{\vrule width 1.2pt}c!{\vrule width 0.9pt}p{0.3cm}|p{0.3cm}|p{0.3cm}|p{0.3cm}!{\vrule width 1.2pt}}\noalign{\hrule height 1.2pt}
 $x_A=0$   &\multirow{2}{*}{0} & \multirow{2}{*}{1} & \multirow{2}{*}{2} & \multirow{2}{*}{3}                  & $x_A=1$ & \multirow{2}{*}{0} & \multirow{2}{*}{1} & \multirow{2}{*}{2} & \multirow{2}{*}{3}            & $x_A=2$ & \multirow{2}{*}{0} & \multirow{2}{*}{1} & \multirow{2}{*}{2} & \multirow{2}{*}{3}            &  $x_A=3$ & \multirow{2}{*}{0} & \multirow{2}{*}{1} & \multirow{2}{*}{2} & \multirow{2}{*}{3} \\
 $x_B=0$   &   &   &   &                                                                                       & $x_B=0$ &   &   &   &                                                                                  & $x_B=0$ &   &   &   &                                                                                  &  $x_B=0$ &   &   &   &   \\\noalign{\hrule height 0.9pt}
   0       &    $\mathcal{L}_{1}$  &    $\mathcal{L}_{2}$    & $ \mathcal{L}_{3}$   &  $\mathcal{L}_{4}$       & 0   &    $\mathcal{L}_{17}$  &    $\mathcal{L}_{18}$    & $ \mathcal{L}_{19}$   &  $\mathcal{L}_{20}$  & 0   &    $\mathcal{L}_{33}$  &    $\mathcal{L}_{34}$    & $ \mathcal{L}_{35}$   &  $\mathcal{L}_{36}$  &  0       &    $\mathcal{L}_{49}$  &    $\mathcal{L}_{50}$    & $ \mathcal{L}_{51}$   &  $\mathcal{L}_{52}$ \\\hline
   1       &    $\mathcal{L}_{5}$  &    $\mathcal{L}_{6}$    & $ \mathcal{L}_{7}$   &  $\mathcal{L}_{8}$       & 1   &    $\mathcal{L}_{21}$  &    $\mathcal{L}_{22}$    & $ \mathcal{L}_{23}$   &  $\mathcal{L}_{24}$  & 1   &    $\mathcal{L}_{37}$  &    $\mathcal{L}_{38}$    & $ \mathcal{L}_{39}$   &  $\mathcal{L}_{40}$  &  1       &    $\mathcal{L}_{53}$  &    $\mathcal{L}_{54}$    & $ \mathcal{L}_{55}$   &  $\mathcal{L}_{56}$ \\\hline
   2       &    $\mathcal{L}_{9}$  &    $\mathcal{L}_{10}$    & $ \mathcal{L}_{11}$   &  $\mathcal{L}_{12}$    & 2   &    $\mathcal{L}_{25}$  &    $\mathcal{L}_{26}$    & $ \mathcal{L}_{27}$   &  $\mathcal{L}_{28}$  & 2   &    $\mathcal{L}_{41}$  &    $\mathcal{L}_{42}$    & $ \mathcal{L}_{43}$   &  $\mathcal{L}_{44}$  &  2       &    $\mathcal{L}_{57}$  &    $\mathcal{L}_{58}$    & $ \mathcal{L}_{59}$   &  $\mathcal{L}_{60}$ \\\hline
   3       &    $\mathcal{L}_{13}$  &    $\mathcal{L}_{14}$    & $ \mathcal{L}_{15}$   &  $\mathcal{L}_{16}$   & 3   &    $\mathcal{L}_{29}$  &    $\mathcal{L}_{30}$    & $ \mathcal{L}_{31}$   &  $\mathcal{L}_{32}$  & 3   &    $\mathcal{L}_{45}$  &    $\mathcal{L}_{46}$    & $ \mathcal{L}_{47}$   &  $\mathcal{L}_{48}$  &  3       &    $\mathcal{L}_{61}$  &    $\mathcal{L}_{62}$    & $ \mathcal{L}_{63}$   &  $\mathcal{L}_{64}$ \\\noalign{\hrule height 1.2pt}
 $x_A=0$   &\multirow{2}{*}{0} & \multirow{2}{*}{1} & \multirow{2}{*}{2} & \multirow{2}{*}{3}                  & $x_A=1$ & \multirow{2}{*}{0} & \multirow{2}{*}{1} & \multirow{2}{*}{2} & \multirow{2}{*}{3}            & $x_A=2$ & \multirow{2}{*}{0} & \multirow{2}{*}{1} & \multirow{2}{*}{2} & \multirow{2}{*}{3}            &  $x_A=3$ & \multirow{2}{*}{0} & \multirow{2}{*}{1} & \multirow{2}{*}{2} & \multirow{2}{*}{3} \\
 $x_B=1$   &   &   &   &                                                                                       & $x_B=1$ &   &   &   &                                                                                  & $x_B=1$ &   &   &   &                                                                                  &  $x_B=1$ &   &   &   &   \\\noalign{\hrule height 0.9pt}
   0   &    $\mathcal{L}_{22}$  &    $\mathcal{L}_{21}$    & $ \mathcal{L}_{24}$   &  $\mathcal{L}_{23}$ & 0   &    $\mathcal{L}_{38}$  &    $\mathcal{L}_{37}$    & $ \mathcal{L}_{40}$   &  $\mathcal{L}_{39}$ &0   &    $\mathcal{L}_{54}$  &    $\mathcal{L}_{53}$    & $ \mathcal{L}_{56}$   &  $\mathcal{L}_{55}$  & 0   &    $\mathcal{L}_{6}$  &    $\mathcal{L}_{5}$    & $ \mathcal{L}_{8}$   &  $\mathcal{L}_{7}$ \\\hline
   1   &    $\mathcal{L}_{18}$  &    $\mathcal{L}_{17}$    & $ \mathcal{L}_{20}$   &  $\mathcal{L}_{19}$ & 1   &    $\mathcal{L}_{34}$  &    $\mathcal{L}_{33}$    & $ \mathcal{L}_{36}$   &  $\mathcal{L}_{35}$ &1   &    $\mathcal{L}_{50}$  &    $\mathcal{L}_{49}$    & $ \mathcal{L}_{52}$   &  $\mathcal{L}_{51}$  & 1   &    $\mathcal{L}_{2}$  &    $\mathcal{L}_{1}$    & $ \mathcal{L}_{4}$   &  $\mathcal{L}_{3}$ \\\hline
   2   &    $\mathcal{L}_{30}$  &    $\mathcal{L}_{29}$    & $ \mathcal{L}_{32}$   &  $\mathcal{L}_{31}$ & 2   &    $\mathcal{L}_{46}$  &    $\mathcal{L}_{45}$    & $ \mathcal{L}_{48}$   &  $\mathcal{L}_{47}$ &2   &    $\mathcal{L}_{62}$  &    $\mathcal{L}_{61}$    & $ \mathcal{L}_{64}$   &  $\mathcal{L}_{63}$  & 2   &    $\mathcal{L}_{14}$  &    $\mathcal{L}_{13}$    & $ \mathcal{L}_{16}$   &  $\mathcal{L}_{15}$ \\\hline
   3   &    $\mathcal{L}_{26}$  &    $\mathcal{L}_{25}$    & $ \mathcal{L}_{28}$   &  $\mathcal{L}_{27}$ & 3   &    $\mathcal{L}_{42}$  &    $\mathcal{L}_{41}$    & $ \mathcal{L}_{44}$   &  $\mathcal{L}_{43}$ &3   &    $\mathcal{L}_{58}$  &    $\mathcal{L}_{57}$    & $ \mathcal{L}_{60}$   &  $\mathcal{L}_{59}$  & 3   &    $\mathcal{L}_{10}$  &    $\mathcal{L}_{9}$    & $ \mathcal{L}_{12}$   &  $\mathcal{L}_{11}$ \\\noalign{\hrule height 1.2pt}
 $x_A=0$   &\multirow{2}{*}{0} & \multirow{2}{*}{1} & \multirow{2}{*}{2} & \multirow{2}{*}{3}                  & $x_A=1$ & \multirow{2}{*}{0} & \multirow{2}{*}{1} & \multirow{2}{*}{2} & \multirow{2}{*}{3}            & $x_A=2$ & \multirow{2}{*}{0} & \multirow{2}{*}{1} & \multirow{2}{*}{2} & \multirow{2}{*}{3}            &  $x_A=3$ & \multirow{2}{*}{0} & \multirow{2}{*}{1} & \multirow{2}{*}{2} & \multirow{2}{*}{3} \\
 $x_B=2$   &   &   &   &                                                                                       & $x_B=2$ &   &   &   &                                                                                  & $x_B=2$ &   &   &   &                                                                                  &  $x_B=2$ &   &   &   &   \\\noalign{\hrule height 0.9pt}
 0   &    $\mathcal{L}_{43}$  &    $\mathcal{L}_{44}$    & $ \mathcal{L}_{41}$   &  $\mathcal{L}_{42}$ &0   &    $\mathcal{L}_{59}$  &    $\mathcal{L}_{60}$    & $ \mathcal{L}_{57}$   &  $\mathcal{L}_{58}$  & 0   &    $\mathcal{L}_{11}$  &    $\mathcal{L}_{12}$    & $ \mathcal{L}_{9}$   &  $\mathcal{L}_{10}$  & 0   &    $\mathcal{L}_{27}$  &    $\mathcal{L}_{28}$    & $ \mathcal{L}_{25}$   &  $\mathcal{L}_{26}$ \\\hline
 1   &    $\mathcal{L}_{47}$  &    $\mathcal{L}_{48}$    & $ \mathcal{L}_{45}$   &  $\mathcal{L}_{46}$ &1   &    $\mathcal{L}_{63}$  &    $\mathcal{L}_{64}$    & $ \mathcal{L}_{61}$   &  $\mathcal{L}_{62}$  & 1   &    $\mathcal{L}_{15}$  &    $\mathcal{L}_{16}$    & $ \mathcal{L}_{13}$   &  $\mathcal{L}_{14}$ & 1   &    $\mathcal{L}_{31}$  &    $\mathcal{L}_{32}$    & $ \mathcal{L}_{29}$   &  $\mathcal{L}_{30}$ \\\hline
 2   &    $\mathcal{L}_{35}$  &    $\mathcal{L}_{36}$    & $ \mathcal{L}_{33}$   &  $\mathcal{L}_{34}$ &2   &    $\mathcal{L}_{51}$  &    $\mathcal{L}_{52}$    & $ \mathcal{L}_{49}$   &  $\mathcal{L}_{50}$  & 2   &    $\mathcal{L}_{3}$  &    $\mathcal{L}_{4}$    & $ \mathcal{L}_{1}$   &  $\mathcal{L}_{2}$     & 2   &    $\mathcal{L}_{19}$  &    $\mathcal{L}_{20}$    & $ \mathcal{L}_{17}$   &  $\mathcal{L}_{18}$ \\\hline
 3   &    $\mathcal{L}_{39}$  &    $\mathcal{L}_{40}$    & $ \mathcal{L}_{37}$   &  $\mathcal{L}_{38}$ &3   &    $\mathcal{L}_{55}$  &    $\mathcal{L}_{56}$    & $ \mathcal{L}_{53}$   &  $\mathcal{L}_{54}$  & 3   &    $\mathcal{L}_{7}$  &    $\mathcal{L}_{8}$    & $ \mathcal{L}_{5}$   &  $\mathcal{L}_{6}$     & 3   &    $\mathcal{L}_{23}$  &    $\mathcal{L}_{24}$    & $ \mathcal{L}_{21}$   &  $\mathcal{L}_{22}$ \\\noalign{\hrule height 1.2pt}

 $x_A=0$   &\multirow{2}{*}{0} & \multirow{2}{*}{1} & \multirow{2}{*}{2} & \multirow{2}{*}{3}                  & $x_A=1$ & \multirow{2}{*}{0} & \multirow{2}{*}{1} & \multirow{2}{*}{2} & \multirow{2}{*}{3}            & $x_A=2$ & \multirow{2}{*}{0} & \multirow{2}{*}{1} & \multirow{2}{*}{2} & \multirow{2}{*}{3}            &  $x_A=3$ & \multirow{2}{*}{0} & \multirow{2}{*}{1} & \multirow{2}{*}{2} & \multirow{2}{*}{3} \\
 $x_B=3$   &   &   &   &                                                                                       & $x_B=3$ &   &   &   &                                                                                  & $x_B=3$ &   &   &   &                                                                                  &  $x_B=3$ &   &   &   &   \\\noalign{\hrule height 0.9pt}
 0   &    $\mathcal{L}_{64}$  &    $\mathcal{L}_{63}$    & $ \mathcal{L}_{62}$   &  $\mathcal{L}_{61}$ & 0   &    $\mathcal{L}_{48}$  &    $\mathcal{L}_{47}$    & $ \mathcal{L}_{46}$   &  $\mathcal{L}_{45}$     & 0   &    $\mathcal{L}_{32}$  &    $\mathcal{L}_{31}$    & $ \mathcal{L}_{30}$   &  $\mathcal{L}_{29}$ &0   &    $\mathcal{L}_{16}$  &    $\mathcal{L}_{15}$    & $ \mathcal{L}_{14}$   &  $\mathcal{L}_{13}$ \\\hline 
 1   &    $\mathcal{L}_{60}$  &    $\mathcal{L}_{59}$    & $ \mathcal{L}_{58}$   &  $\mathcal{L}_{57}$ & 1   &    $\mathcal{L}_{44}$  &    $\mathcal{L}_{43}$    & $ \mathcal{L}_{42}$   &  $\mathcal{L}_{41}$     & 1   &    $\mathcal{L}_{28}$  &    $\mathcal{L}_{27}$    & $ \mathcal{L}_{26}$   &  $\mathcal{L}_{25}$ &1   &    $\mathcal{L}_{12}$  &    $\mathcal{L}_{11}$    & $ \mathcal{L}_{10}$   &  $\mathcal{L}_{9}$ \\\hline 
 2   &    $\mathcal{L}_{56}$  &    $\mathcal{L}_{55}$    & $ \mathcal{L}_{54}$   &  $\mathcal{L}_{53}$ & 2   &    $\mathcal{L}_{40}$  &    $\mathcal{L}_{39}$    & $ \mathcal{L}_{38}$   &  $\mathcal{L}_{37}$ & 2   &    $\mathcal{L}_{24}$  &    $\mathcal{L}_{23}$    & $ \mathcal{L}_{22}$   &  $\mathcal{L}_{21}$ &2   &    $\mathcal{L}_{8}$  &    $\mathcal{L}_{7}$    & $ \mathcal{L}_{6}$   &  $\mathcal{L}_{5}$ \\\hline 
 3   &    $\mathcal{L}_{52}$  &    $\mathcal{L}_{51}$    & $ \mathcal{L}_{50}$   &  $\mathcal{L}_{49}$ & 3   &    $\mathcal{L}_{36}$  &    $\mathcal{L}_{35}$    & $ \mathcal{L}_{34}$   &  $\mathcal{L}_{33}$  & 3   &    $\mathcal{L}_{20}$  &    $\mathcal{L}_{19}$    & $ \mathcal{L}_{18}$   &  $\mathcal{L}_{17}$ &3   &    $\mathcal{L}_{4}$  &    $\mathcal{L}_{3}$    & $ \mathcal{L}_{2}$   &  $\mathcal{L}_{1}$ \\\noalign{\hrule height 1.2pt}
\end{tabular}
\vspace{0.05 cm}
\caption{The 4-fold Latin Hyper-Cube of side 4 used at the relay node as an encoder in the BC for non-adaptive network coding using two channel uses for all channel conditions}
\label{fig:xor}
\end{table*}

\section{SIMULATION RESULTS}

Our scheme enables information exchange in the four way relaying scenario amongst the four users in total two channel uses. It is based on the removal of singular fade states for the case. Removing a singular fade state essentially means ensuring a minimum cluster distance greater than zero for the fade state. This scheme removes a subset of singular fade states, called removable singular fade states. Simulation results presented in this section, done for the case when the end nodes use 4-PSK signal set, identify some cases where the proposed scheme outperforms the naive approach that uses the same map for all fade states and vice verse. 

\begin{figure}[ht]
\centering
\includegraphics[totalheight=2.2in,width=3.6in]{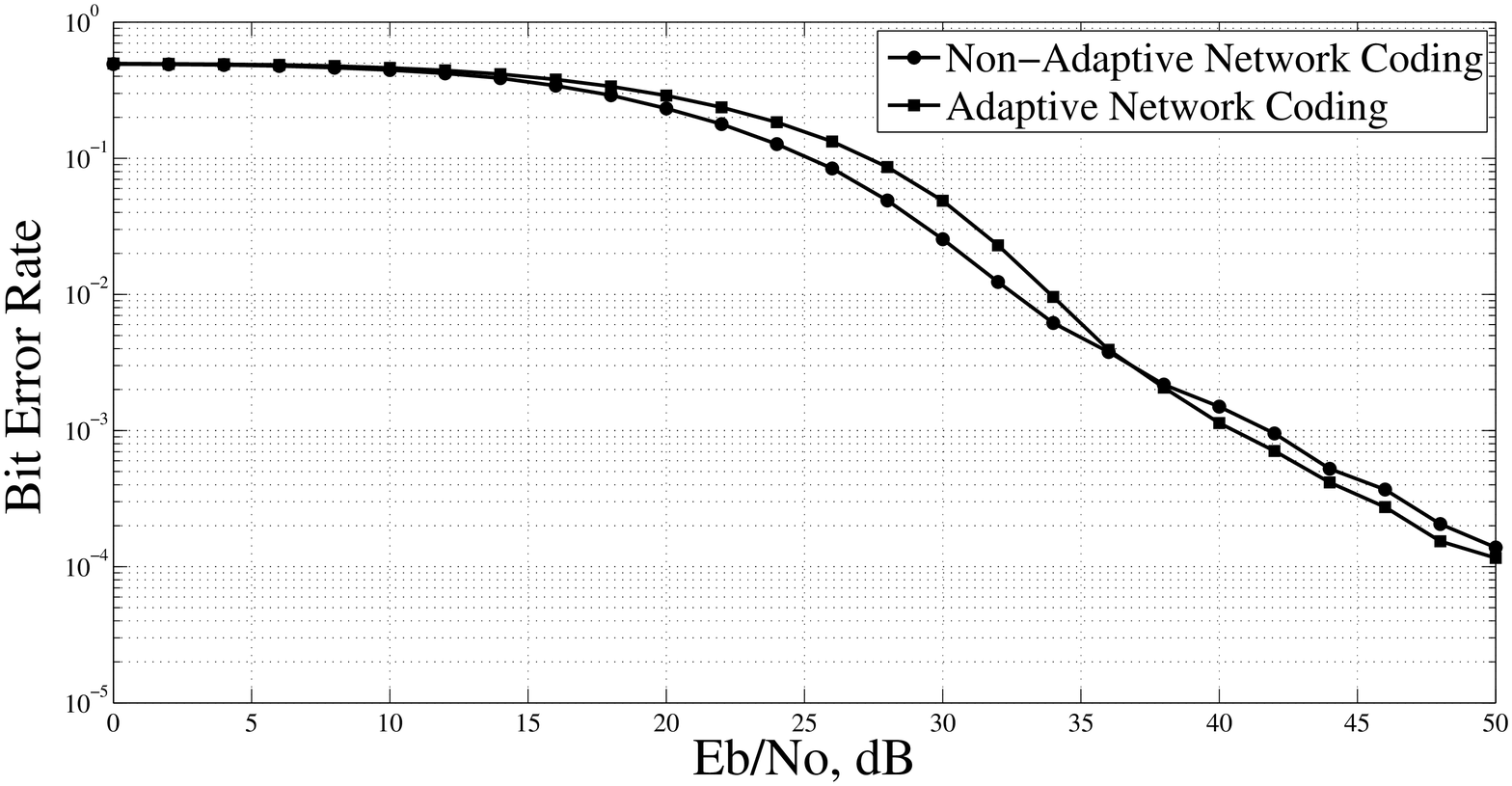}
\vspace{-1 cm}
\caption{SNR Vs BER curves for different schemes}	
\label{fig:plot_bc_rician}	
\end{figure}

Here, $H_A, H_B, H_C, H_D, H'_A, H'_B, H'_C$ and $H'_D$ are distributed according to Rician distribution and channel variances equal to 0 dB. The plot for the case with a Rician Factor of 20 dB for a frame length of 256 bits is as shown Fig. \ref{fig:plot_bc_rician}. The figure shows the SNR vs bit-error-rate curves for the following schemes:  (a) the adaptive network coding scheme presented in this paper; (b) the non-adaptive network coding using two channel uses, in which the same $4 \times 4\times 4 \times 4$ array is used by relay as an encoder for all channel conditions, given by Table \ref{fig:xor}. It can be seen from Fig. \ref{fig:plot_bc_rician} that the scheme based on the adaptive clustering relaying perform better than the schemes based on non-adaptive clustering at high SNR, since adaptive clustering removes 960 singular fade states.

\begin{figure}[ht]
\centering
\includegraphics[totalheight=2.2in,width=3.6in]{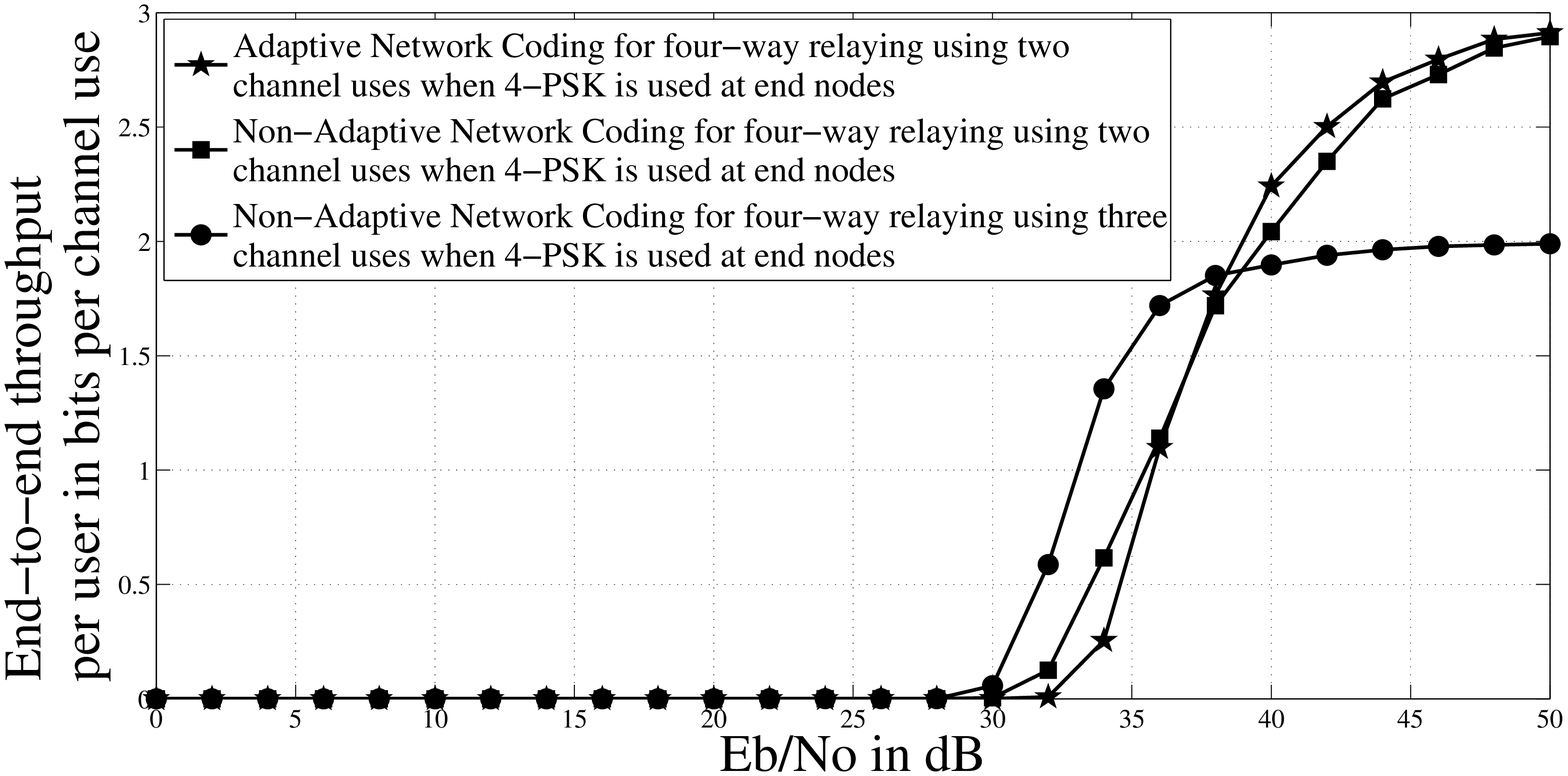}
\vspace{-1 cm}
\caption{SNR Vs Throughput curves for different schemes}
\label{tput}
\end{figure}

Fig. \ref{tput} shows the throughput comparison between (a), (b), and (c): a non-adaptive network coding scheme that uses three channel uses, viz., A, B transmit in the first channel use, C, D transmit in the next channel use, and the relay node transmits the function of these MA phase transmissions in the third channel use. At high SNR regime, the scheme presented in this paper leads to higher throughput, i.e., (a) outperforms both (b) and (c).


\section{Conclusion}
We considered the four-way wireless relaying scenario, where four nodes operate in half-duplex mode and transmit points from the same M-PSK constellation. Information exchange is made possible using just two channels uses, unlike the existing work done for the case, to the best of our knowledge. The Relay node clusters the $4^{4}$ possible transmitted tuples $\left(x_{A},x_{B},x_{C},x_D\right)$ into various clusters such that \textit{the exclusive law} is satisfied and singular fade subspaces are removed. This necessary requirement of satisfying the exclusive law is shown to be the same as the clustering being represented by a 4-fold Latin Hyper-Cube of side M, if M-PSK is used at the end nodes. Using the proposed scheme that uses two channel uses, the size of the resulting constellation used by the relay node R in the BC phase is reduced from $4^{4}$ to lie between 64 to 90 for M=4. The size of the clustering utilizing modified clustering may not be the best that can be achieved, and it might be possible to fill the array with lesser than 90 symbols.

\begin{figure*}
\begin{center}
\textbf{\textsc{APPENDIX}}\\
\end{center}
\vspace{.5cm}
Null Spaces of the Singular Fades Spaces for the case $x_A - x'_A, ~ x_B-x'_B,~ x_C-x'_C \text{~and~} x_D-x'_D \in \mathcal{D}_1$.
\vspace{.5cm}

\tiny
$ 1. \left\langle \left[ {
 } \right]\right\rangle $\\
\label{fig:sfscase4}
\end{figure*}

\end{document}